\definecolor{webgreen}{rgb}{0,.5,0}
\definecolor{webbrown}{rgb}{.6,0,0}
\newcommand{\seqnum}[1]{\href{https://oeis.org/#1}{\underline{#1}}}
\def\modd#1 #2{#1\ \mbox{\rm (mod}\ #2\mbox{\rm )}}
\DeclareMathOperator{\PalFac}{PalFac}
\DeclareMathOperator{\Fac}{Fac}
\DeclareMathOperator{\Lead}{Lead}
\begin{document}

\theoremstyle{plain}
\newtheorem{theorem}{Theorem}
\newtheorem{corollary}[theorem]{Corollary}
\newtheorem{lemma}[theorem]{Lemma}
\newtheorem{proposition}[theorem]{Proposition}

\theoremstyle{definition}
\newtheorem{definition}[theorem]{Definition}
\newtheorem{example}[theorem]{Example}
\newtheorem{conjecture}[theorem]{Conjecture}

\theoremstyle{remark}
\newtheorem{remark}[theorem]{Remark}

\title{Words With Few Palindromes, Revisited}

\author{Lukas Fleischer and Jeffrey Shallit\\
School of Computer Science\\
University of Waterloo \\
Waterloo, ON  N2L 3G1 \\
Canada\\
\href{mailto:shallit@uwaterloo.ca}{\tt shallit@uwaterloo.ca}
}

\maketitle

\begin{abstract}
In 2013, Fici and Zamboni proved a number of theorems about finite
and infinite words having only a small number of factors
that are palindromes.
In this paper we rederive some of their results, and obtain some new ones,
by a different method based on finite automata.   
\end{abstract}

\section{Introduction}
In this paper we are concerned with certain avoidance properties
of finite and infinite words.

Recall that a word $x$ is said to be a {\it factor} of a word 
$w$ if there exist words $y,z$ such that $w = yxz$.  For example,
the word {\tt act} is a factor of the English word
{\tt factor}.  We sometimes say $w$ {\it contains\/} $x$.
Another term for {\it factor}
is {\it subword}, although this latter term sometimes refers to
a different concept entirely.  We say a (finite or infinite) 
word $x$ {\it avoids} a
set $S$ if no element of $S$ is a factor of $x$.

The reverse of a word $x$ is written $x^R$.   Thus, for example,
${\tt (drawer)}^R = {\tt reward}$.  A word is a {\it palindrome\/}
if $x = x^R$, such as the English word {\tt radar}.  A 
palindrome is called {\it even\/} if its length is even, and
{\it odd\/} if its length is odd.  For example, the English
word {\tt noon} is even, while {\tt madam} is odd.

Fici and Zamboni \cite{Fici&Zamboni:2013} studied avoidance of 
palindromes.   In particular, they were interested in constructing
infinite words with the minimum possible number of distinct
palindromic factors, and
infinite words that minimize the length of the largest palindromic
factor.  In both cases these minima crucially depend on the size of the
underlying alphabet.  

In this paper we revisit their results using a different approach.
The crucial observation is in Section~\ref{two}:  the set of
finite words over a finite alphabet containing at most
$n$ distinct palindromic factors (resp., whose largest palindromic
factor is of length at most $n$) is regular.

The companion paper to this one is \cite{Fleischer&Shallit:2019},
where some of the same ideas are used.

\section{Palindromes and regularity}
\label{two}

Let $x$ be a finite or infinite word.  The set of all of its
factors is written $\Fac(x)$, and the set of its
factors that are palindromes is written $\PalFac(x)$.  
Let $P_\ell (\Sigma)$ (resp., $P_{\leq \ell} (\Sigma)$) 
be the set of all palindromes of length
$\ell$ (resp., length $\leq \ell$) over $\Sigma$.  Of course, since 
both of these sets are finite, they are regular.

\begin{theorem}
Let $S$ be a finite set of palindromes over an alphabet
$\Sigma$.  Then the language
$$ C_\Sigma(S) :=  \{ x \in \Sigma^* \ : \ \PalFac(x) \subseteq S \} $$
is regular.
\label{one}
\end{theorem}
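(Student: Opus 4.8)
The plan is to show that membership in $C_\Sigma(S)$ can be decided by forbidding only \emph{finitely many} factors, after which regularity follows instantly from closure properties of the regular languages.

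First I would record the elementary ``peeling'' observation: if $p$ is a palindrome with $|p| \geq 2$, then writing $p = a u b$ with $a,b \in \Sigma$, the identity $p = p^R$ forces $a = b$ and $u = u^R$, so $u$ is a palindrome of length $|p|-2$ occurring at the centre of $p$. Iterating, if a word $x$ has a palindromic factor of length $\ell$, then it has palindromic factors of every length $\ell, \ell-2, \ell-4, \dots$ down to $0$ or $1$; in particular, if $x$ has \emph{any} palindromic factor of length greater than a threshold $m$, then it has one of length $m+1$ or $m+2$ (the one of the same parity as $\ell$).

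Next, let $n = \max\{\,|s| : s \in S\,\}$ (take $n = 0$, or dispose of the case directly, if $S = \varnothing$), and set
$$ T := P_{\leq n+2}(\Sigma) \setminus S, $$
a finite set of palindromes. I claim $C_\Sigma(S) = \{\, x \in \Sigma^* : x \text{ avoids } T \,\}$. For one inclusion, if $\PalFac(x) \subseteq S$ then no factor of $x$ lies in $P_{\leq n+2}(\Sigma) \setminus S$, so $x$ avoids $T$. For the other, suppose $\PalFac(x) \not\subseteq S$ and pick a palindromic factor $p$ of $x$ with $p \notin S$. If $|p| \leq n+2$ then $p \in T$ already. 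If $|p| > n+2$, the peeling observation gives a palindromic factor $q$ of $x$ with $|q| \in \{n+1, n+2\}$; since $|q| > n \geq |s|$ for every $s \in S$, we have $q \notin S$, hence $q \in P_{\leq n+2}(\Sigma) \setminus S = T$. In either case $x$ does not avoid $T$, proving the claim. Finally, since $T$ is finite, the set $\Sigma^* T \Sigma^* = \bigcup_{t \in T} \Sigma^* t \Sigma^*$ of words containing a factor from $T$ is regular, and $C_\Sigma(S)$ is its complement, hence regular.

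The only real content here is the peeling observation together with the small bookkeeping point that $n+2$, rather than $n$, is the correct cutoff: one must leave room for both parities of an offending long palindrome so that the descent is guaranteed to land inside $P_{\leq n+2}(\Sigma)$. Everything after that is just closure of the regular languages under finite union and complementation, so I do not anticipate any genuine obstacle.
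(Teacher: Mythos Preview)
Your proof is correct and follows essentially the same approach as the paper: both define the forbidden set $T = P_{\leq n+2}(\Sigma) \setminus S$ (with $n$ the maximum length in $S$), use the central-peeling observation to show that any word with a palindromic factor outside $S$ must contain some element of $T$, and conclude regularity from the fact that $C_\Sigma(S)$ is the complement of the finite union $\bigcup_{t\in T}\Sigma^* t\,\Sigma^*$. Your write-up makes the peeling step and the parity reason for the $n+2$ cutoff a bit more explicit, but the argument is the same.
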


\begin{proof}
Let $\ell$ be the length of the longest palindrome in $S$.  
We claim that $ \overline{C_\Sigma (S)} = L$, where 
$$L = \bigcup_{t \in P_{\leq \ell+2} - S} \Sigma^* \, t \ \Sigma^* . $$

\noindent $\overline{C_\Sigma(S)} \subseteq L$:   If $x \in
\overline{C_\Sigma(S)}$, then $x$ must have some palindromic factor $y$
such that $y \not\in S$.   If $|y| \leq \ell+2$, then
$y \in P_{\leq \ell+2} - S$.  If $|y| > \ell+2$, we can write 
$y = uvu^R$ for some palindrome $v$ such that $|v| \in \{ \ell+1, \ell+2 \}$.
Hence $x$ has the palindromic factor $v$ and $v \in P_{\leq \ell+2} - S$.
In both cases $x \in L$.

\bigskip

\noindent $L \subseteq \overline{C_\Sigma(S)}$:   Let $x \in L$.
Then $x \in  \Sigma^* \, t \ \Sigma^* $ for some
$t \in P_{\leq \ell+2} - S$.  Hence $x$ has a palindromic factor
outside the set $S$ and so $x \not\in C_\Sigma(S)$.

\bigskip

Thus we have written $\overline{C_\Sigma(S)}$ as the finite union of
regular languages, and so $C_\Sigma(S)$ is also regular.
\end{proof}

\begin{remark}
The set $P_{\leq \ell+2} (\Sigma) - S$ can be fairly large.  However, because
$$ \Sigma^* \, x \, \Sigma^* \subseteq \Sigma^* \, y \, \Sigma^*$$
if $y$ is a factor of $x$,
we can replace $P_{\leq \ell+2} (\Sigma) - S$ in
Theorem~\ref{one} with the subset of its minimal elements
under the factor ordering.
(An element $x \in T$ is {\it minimal\/} if $x, y \in T$ with
$y$ a factor of $x$ implies that $x = y$.)  This typically will
have many fewer elements.
\end{remark}

\begin{corollary}
\leavevmode
\begin{itemize}
\item[(a)]
Let $D_{\ell} (\Sigma)$ be the set of finite words over $\Sigma$ containing
at most $\ell$ distinct palindromes as factors
(including the empty word).  Then $D_{\ell} (\Sigma)$ is regular.

\item[(b)]
Let $E_{\ell} (\Sigma)$ be the set of finite words over
$\Sigma$ containing no palindrome of length $> \ell$ as a factor.
Then $E_{\ell} (\Sigma)$ is regular.

\item[(c)]
Let $R_{\ell,m} (\Sigma)$ be the set of finite words over
$\Sigma$ containing no even palindrome of length $>\ell$ nor
any odd palindrome of length $>m$ as factors.  Then
$R_{\ell,m}(\Sigma)$ is regular.

\item[(d)]
Let $T_{\ell,m} (\Sigma)$ be the set of finite words over
$\Sigma$ containing at most $\ell$ even palindromes and
$m$ odd palindromes.  Then $T_{\ell,m}(\Sigma)$ is regular.
\end{itemize}
\label{palreg}
\end{corollary}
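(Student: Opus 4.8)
The plan is to deduce all four statements from Theorem~\ref{one} by exhibiting each of the languages as a \emph{finite} union of languages of the form $C_\Sigma(S)$, where $S$ runs over suitable finite sets of palindromes; each such $C_\Sigma(S)$ is regular by Theorem~\ref{one}, and the regular languages are closed under finite union.

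Parts (b) and (c) require essentially no work. A word $x$ has no palindromic factor of length $>\ell$ exactly when $\PalFac(x)\subseteq P_{\le\ell}(\Sigma)$, so $E_\ell(\Sigma)=C_\Sigma\bigl(P_{\le\ell}(\Sigma)\bigr)$, and $P_{\le\ell}(\Sigma)$ is a finite set of palindromes. For (c), let $S$ consist of all even-length palindromes over $\Sigma$ of length $\le\ell$ together with all odd-length palindromes over $\Sigma$ of length $\le m$; then $x$ avoids long even and long odd palindromes in the stated sense if and only if $\PalFac(x)\subseteq S$, so $R_{\ell,m}(\Sigma)=C_\Sigma(S)$ with $S$ finite.

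For (a) and (d) the one extra ingredient is that a bound on the \emph{number} of palindromic factors forces a bound on their \emph{length}. I would record the elementary observation that if $p=a_1a_2\cdots a_k$ is a palindrome, then each central factor $a_{i+1}a_{i+2}\cdots a_{k-i}$, for $0\le i\le\lfloor k/2\rfloor$, is again a palindrome; there are $\lfloor k/2\rfloor+1$ of them, they have pairwise distinct lengths (hence are pairwise distinct), and each has the same parity as $k$. Thus if $x$ has at most $\ell$ distinct palindromic factors in all, its longest palindromic factor has length at most $2\ell-1$; and if $x$ has at most $\ell$ even and at most $m$ odd palindromic factors, its longest even palindromic factor has length at most $2\ell$ and its longest odd palindromic factor has length at most $2m-1$. (The precise constants are immaterial, and depend only on whether one counts the empty word among the even palindromes.)

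Granting this, part (a) follows from the identity
$$ D_\ell(\Sigma) \;=\; \bigcup_{\substack{S\subseteq P_{\le 2\ell-1}(\Sigma)\\ |S|\le\ell}} C_\Sigma(S), $$
which is a finite union: if $x\in C_\Sigma(S)$ with $|S|\le\ell$ then $\PalFac(x)\subseteq S$ and so $x\in D_\ell(\Sigma)$, while conversely any $x\in D_\ell(\Sigma)$ lies in $C_\Sigma\bigl(\PalFac(x)\bigr)$ and $\PalFac(x)$ is one of the admissible sets $S$ by the length bound. Part (d) is identical in structure: $T_{\ell,m}(\Sigma)$ is the union of all $C_\Sigma(S)$ for which $S$ is a finite set of palindromes containing at most $\ell$ even and at most $m$ odd members, with the even members of length $\le 2\ell$ and the odd members of length $\le 2m-1$; there are only finitely many such $S$. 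Since each of the finitely many languages $C_\Sigma(S)$ is regular by Theorem~\ref{one}, so are $D_\ell(\Sigma)$, $E_\ell(\Sigma)$, $R_{\ell,m}(\Sigma)$, and $T_{\ell,m}(\Sigma)$. The only genuine point in the argument --- and it is a mild one --- is verifying that the unions in (a) and (d) are finite, which is exactly what the central-factor observation supplies; everything else is bookkeeping on top of Theorem~\ref{one}.
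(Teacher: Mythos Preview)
Your proof is correct and follows essentially the same approach as the paper: reduce each language to a single $C_\Sigma(S)$ (parts (b) and (c)) or to a finite union of such (parts (a) and (d)), using the central-factor observation to bound palindrome lengths in (a) and (d). Your formulation of (d) as a single union over sets $S$ with at most $\ell$ even and $m$ odd members is in fact cleaner than the paper's displayed formula.
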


\begin{proof}
\leavevmode
\begin{itemize}
\item[(a)] Note that no word in $D_{\ell} (\Sigma)$ cannot contain a palindrome of
length $r \geq 2\ell$ as a factor, because then it would also contain
palindromes of length $0, 2,\ldots r-2$ as factors ($r$ even)
or length $1,3, \ldots r-2$ as factors ($r$ odd).  In both
cases this gives at least $\ell+1$ distinct palindromes.

Hence 
$$ D_\ell (\Sigma) = 
\bigcup_{
{|S| \leq \ell}
\atop
{S \subseteq P_{\leq 2\ell-1} (\Sigma)} 
}
C_{\Sigma} (S) , $$
the union of a finite number of regular languages.

\item[(b)]  We have $E_\ell (\Sigma) = C_{\Sigma} (P_{\leq \ell} (\Sigma)) $.

\item[(c)]   We have $R_{\ell,m} (\Sigma) =
C_{\Sigma} \biggl( \bigl(P_{\leq \ell} (\Sigma) 
\, \cap \, (\Sigma^2)^* \bigr) \ \cup \ \bigl(P_{\leq m} (\Sigma) \, \cap \, \Sigma(\Sigma^2)^* \bigr) \biggr) $.

\item[(d)]  We have $$T_{\ell, m} (\Sigma) =
\bigcup_{{|S_1| \leq \ell} \atop {S_1 \subseteq \bigcup_{0 \leq i < \ell}
	P_{2i} (\Sigma)}} C_\Sigma (S_1) \ \cup \ 
\bigcup_{{|S_2| \leq m} \atop {S_1 \subseteq \bigcup_{0 \leq i < m}
	P_{2i+1} (\Sigma)}} C_\Sigma (S_2).$$

\end{itemize}
\end{proof}

Theorem~\ref{one} and Corollary~\ref{palreg} implicitly provide
an algorithm for actually finding the DFA's accepting the languages
$D_\ell (\Sigma)$, $E_\ell (\Sigma)$, $R_{\ell,m} (\Sigma)$, 
and $T_{\ell,m} (\Sigma)$:  namely,
construct automata for each term of the unions and intersections,
and combine them
using standard techniques (e.g., \cite[Sect.~3.2]{Hopcroft&Ullman:1979}),
possibly using minimization at each step.  This can be
carried out, for example, using a software package such as
{\tt Grail} \cite{Raymond&Wood:1994,Campeanu:2019}.

However, our experience shows that the intermediate automata so generated
can be quite large.  Instead, we use a different approach to
construct the automata directly, which
we now illustrate for the case of $D_\ell(\Sigma)$, as follows.

The states are of the form $\Sigma^{\leq 2\ell-1} \times 2^U$, where
$U$ is the set of the nonempty palindromes of length at most $2\ell-1$.
Given a state of the form $(x, S)$, upon reading the letter $a$,
we go to the new state $(y, T)$, where $y = xa$ (if $|xa| \leq 2\ell-1$)
or the suffix of length $2\ell-1$ of $xa$ (if $|xa| = 2\ell$), and
$T = S \ \cup \ \PalFac(xa)$.  If $|T| > \ell$, it is labeled as
a rejecting state.

The resulting automaton, as described, still can be rather large.  However,
many states will not be reachable from the start state.  Instead,
we construct all reachable states
using a queue, in a breadth-first manner starting from the
initial state $(\varepsilon, \emptyset)$.  As soon as we reach a state
$(x,S)$ with $|S| > \ell$, the state is labeled as a dead state and
we do not append it to the queue.

We implemented this idea in Dyalog APL.  Our program creates an
automaton in {\tt Grail} format which can then be minimized using
{\tt Grail}.

Our approach allows us to recover many of the results
of Fici and Zamboni, and even more.   For example, the DFA's
we compute give us
a complete description of {\it all\/} words, both finite
and infinite, containing at most $\ell$ distinct palindromic factors.
It provides an easy and efficient way to determine whether
or not there exist infinite words containing a given avoidance property,
and if so, whether some of these words are aperiodic.
As corollaries, we can computably determine a linear recurrence
giving the number $a(n)$ of such words of length $n$,
and the asymptotic growth rate of the sequence $(a(n))_{n \geq 0}$.  

Finally, our approach replaces a long case-based argument
that can be difficult to follow, and is prone to error,
with a machine computation that can be verified mechanically.

\section{Linear recurrences and automata}
\label{three}

We summarize some well-known techniques for enumerating the
number of length-$n$ words accepted by deterministic finite
automata that we use in this paper.  For more details,
see, for example, \cite[Sect.~3.8]{Shallit:2009} and
\cite{Everest&vdp&Shparlinski&Ward:2003}.

We introduce some notation and terminology:  if
$q(X) = q_t X^t + q_{t-1} X^{t-1} + \cdots + q_1X + q_0$ 
is a polynomial and ${\bf a} = (a(n))_{n \geq 0}$ is a sequence,
then $q \circ {\bf a}$ denotes the sequence 
$(q_t a(t+i) + q_{t-1} a(t+i-1) + \cdots
+ q_1 a(i+1) + q_0 a(i))_{i \geq 0}$ obtained by taking the
dot product of the coefficients of $q$ with sliding ``windows''
of the sequence $\bf a$.
If $q \circ {\bf a}$ is the sequence $(0,0,0,\ldots)$, we 
call $q$ an {\it annihilator\/} of $(a(n))_{n \geq 0}$.
It is now easy to verify that if $q, r$ are polynomials, then
$(qr) \circ {\bf a} = q \circ (r \circ {\bf a})$.
We also define $\Lead(q) = q_t$ to be the leading coefficient of $q$.

Suppose $Q = \{ q_0, q_1, \ldots, q_{r-1} \}$ and
$A = (Q, \Sigma, \delta, q_0, F)$ is an $r$-state DFA.
From this we can compute an $n \times n$ matrix
$M$ such that $M[i,j] = \{ a \in \Sigma \ : \ \delta(q_i, a) = q_j \}$.
Let $v = [1 \ 0 \ 0 \ \cdots 0]$ be the row vector
with a $1$ in the first position and $0$'s elsewhere, and
let $w$ be the column vector with $1$'s in positions corresponding
to the final states $F$ and $0$'s corresponding to $Q-F$.
Then $a(n)$, the number of length-$n$ words accepted by $A$, is
$v M^n w$.

We can find a linear recurrence for the sequence $(a(n))_{n \geq 0}$ as follows:
first, we compute the minimal polynomial $p(X) =
X^t + p_{t-1} X^{t-1} + \cdots + p_1X + p_0$ of $M$ using standard
techniques.   Then $p(M) = 0$, so
$M^t + p_{t-1} M^{t-1} + \cdots + p_1M + p_0I = 0$.  By multiplying
by $M^i$, we get
$M^{t+i} + p_{t-1} M^{t+i-1} + \cdots + p_1M^{i+1} + p_0 M^i = 0$.
By premultiplication by $v$ and postmultiplication by $w$, we get
$v M^{t+i} w + p_{t-1} v M^{t+i-1} w + \cdots
+ p_1 v M^{i+1} w + p_0 vM^i w = 0$.
Hence $a(t+i) + p_{t-1} a(t+i-1) + \cdots + p_1 a(i+1) + p_0 a(i) = 0$,
and hence $(a(n))_{n \geq 0}$ satisfies a linear
recurrence with constant coefficients given by the $p_i$.
Using our terminology, the polynomial $p$ annihilates $(a(n))_{n \geq 0}$.

However, $p$ may not be the lowest-degree
annihilator of $(a(n))_{n \geq 0}$.  A
lower degree annihilator will necessarily be a divisor of the polynomial
$p$.  The lowest degree annihilator can be determined using an algorithm
based on the following theorem, which seems to be new.

\begin{theorem}
Suppose the polynomial $p(X)$, with leading coefficient nonzero,
annihilates the sequence $(a(n))_{n \geq 0}$ and suppose $q(x) \, | \, p(x)$.
If the polynomial ${p \over q}$ also
annihilates the sequence $(a(n))_{n \geq 0}$ for the
first $\deg q$ consecutive windows
of $(a(n))_{n \geq 0}$, 
then it annihilates all of $(a(n))_{n \geq 0}$.
\end{theorem}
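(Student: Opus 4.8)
The plan is to set $r = p/q$ and study the \emph{defect sequence} $\mathbf{b} := r \circ \mathbf{a}$, where $\mathbf{a} = (a(n))_{n \ge 0}$; the goal is to show $\mathbf{b}$ is the zero sequence, which is exactly the assertion that ${p \over q}$ annihilates $(a(n))_{n \ge 0}$. The starting point is the composition identity $(qr) \circ \mathbf{a} = q \circ (r \circ \mathbf{a})$ recorded above: since $p = qr$ annihilates $\mathbf{a}$, we obtain $q \circ \mathbf{b} = \mathbf{0}$, i.e.\ $\mathbf{b}$ satisfies the linear recurrence encoded by the polynomial $q$.

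The next step is to observe that this recurrence can be solved forwards. Write $q(X) = q_t X^t + \cdots + q_1 X + q_0$ with $t = \deg q$; since $\Lead(p) = \Lead(q)\,\Lead(r)$ is nonzero by hypothesis, we have $q_t \ne 0$. The relation $q \circ \mathbf{b} = \mathbf{0}$ says $q_t\, b(t+i) + q_{t-1}\, b(t+i-1) + \cdots + q_0\, b(i) = 0$ for every $i \ge 0$, and because $q_t$ is invertible this expresses $b(t+i)$ in terms of $b(i), b(i+1), \ldots, b(t+i-1)$. Hence the whole sequence $\mathbf{b}$ is determined by its initial block $b(0), b(1), \ldots, b(t-1)$. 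Now the hypothesis that ${p \over q}$ annihilates the first $\deg q$ consecutive windows of $(a(n))_{n \ge 0}$ is precisely the statement that these $t$ initial entries of $\mathbf{b}$ all vanish, so a straightforward induction on $n$ using the forward recurrence gives $b(n) = 0$ for all $n \ge 0$, and therefore ${p\over q}$ annihilates all of $(a(n))_{n \ge 0}$.

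There is no substantial obstacle here; the work is entirely in two bookkeeping points. First, one must confirm $\Lead(q) \ne 0$, so that the recurrence satisfied by $\mathbf{b}$ really can be propagated forward from its initial block — this is the sole place the assumption that $p$ has nonzero leading coefficient enters. Second, one must line up the indexing conventions so that ``${p\over q}$ annihilates the first $\deg q$ windows of $\mathbf{a}$'' is correctly identified with ``the first $\deg q$ entries of $\mathbf{b} = (p/q) \circ \mathbf{a}$ are zero''; once that identification is in place, the induction closing the argument is immediate.
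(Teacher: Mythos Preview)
Your proof is correct and follows essentially the same approach as the paper: both define the defect sequence $\mathbf{b} = (p/q)\circ\mathbf{a}$, observe that $q$ annihilates $\mathbf{b}$ via the composition identity, and then use $\Lead(q)\ne 0$ together with the vanishing of the first $\deg q$ entries of $\mathbf{b}$ to conclude $\mathbf{b}=\mathbf{0}$. The only cosmetic difference is that the paper phrases the last step as a proof by contradiction (take the first nonzero entry $f(r)=e$ and derive $e\,\Lead(q)=0$), whereas you run the same recurrence forward by direct induction.
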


\begin{proof}
Suppose $p \over q$ annihilates ${\bf a} = (a(n))_{n \geq 0}$
for the first $s := \deg q$ consecutive windows of $\bf a$, but
not all of $\bf a$.

Write ${p \over q} = d_t X^t + \cdots + d_1 X + d_0$.  Define
$(f(n))_{n \geq 0} = {p \over q} \circ (a(n))_{n \geq 0}$.
Thus $ f(n) = \sum_{0 \leq i \leq t} d_i a(n+i)$.  Then by hypothesis
we have $f(n) = 0$ for $n = 0, 1, \ldots , s-1$.  Now $p$
annihilates $(a(n))_{n \geq 0}$, so $q$ annihilates $(f(n))_{n \geq 0}$.
Let $r$ be the least index such that
$f(r) \not= 0$.  So $(f(0), f(1), \ldots, f(r)) = (0,0, \ldots ,0, e)$
for some $e \not= 0$.  But $q$ annihilates $(f(n))_{n \geq 0}$,
so if $r \geq s$ then
$q \circ ( \overbrace{0,0, \ldots , 0}^{s-1}, e) = 0$.  But
$q \circ ( \overbrace{0,0,\ldots ,0}^{s-1} ,e) = e \Lead(q) \not= 0$,
a contradiction. 
\end{proof}

This gives us the following algorithm for finding the lowest-degree
annihilator of a recurrence.

\bigskip\hrule
\begin{tabbing} 
{\sc Algorithm LDA}$(p, {\bf a})$ \\
\ \\
Write $p := q_1 q_2 \cdots q_m$, the product of (not necessarily
distinct) irreducible factors.  \\
For \= $i := 1$ to $m$ do \\
\> 	$r := p/q_i$ \\
\>	If $r$ annihilates the first $\deg r$ windows of $\bf a$, set
$p := p/q_i$. \\
return($p$);
\end{tabbing}
\smallskip\hrule
\bigskip

Terms of the form $X^n$ in an annihilator can be removed if one assumes
that the recurrence begins at $a(n)$ instead of $a(0)$.  For this reason,
in this paper, we do not report such terms in our annihilators.

In our computations, we used {\tt Maple} to compute minimal polynomials
(via the {\tt LinearAlgebra} package) and factor them.

\section{Automata and infinite words}

We recall some material from the companion paper \cite{Fleischer&Shallit:2019}.

The DFA's generated in this paper are for regular
languages $L$ that are defined by avoidance of a
finite set $S$ of finite words.  Such languages are
called {\it factorial}; that is,
every factor of a word of $L$ is also a word of $L$.

The minimal DFA $M = (Q, 
\Sigma, \delta, q_0, F)$ for a factorial language $L \not= \Sigma^*$
has exactly one nonaccepting state,
which is the dead state.  (A state is {\it dead\/} if it is nonaccepting
and transitions to itself on all letters of the alphabet $\Sigma$.)
{\it In this paper, we do not display
this dead state in our figures, nor count it in our
discussion of the cardinality of a DFA's states.}

The (one-sided) infinite words with the given avoidance property are
then given by the infinite paths through $M$, starting at the start
state $q_0$.  

A state $q$ is called {\it recurrent} if there is a nonempty word
$w$ such that $\delta(q,w) = q$.  A state $q$ is called
{\it birecurrent} if there are two noncommuting words $x_0, x_1$ such
that $\delta(q,x_0) = \delta(q,x_1) = q$.

As shown in \cite{Fleischer&Shallit:2019}, an infinite word having
the desired avoidance property exists iff $M$ has a recurrent state,
and aperiodic infinite words exist iff $M$ has a birecurrent state.
In this latter case, there are actually uncountably many such words.
As shown in \cite{Fleischer&Shallit:2019}, these correspond to
the image, under the morphism $h:  0 \rightarrow x_0$ and $1 \rightarrow x_1$
of an aperiodic binary word.

Furthermore, we can find infinite words avoiding $S$ that are
(a) uniformly recurrent and aperiodic (b) linearly recurrent and
aperiodic and (c) $k$-automatic for any $k \geq 2$ and uniformly
recurrent and aperiodic
and (d) the fixed point of a primitive uniform morphism, which is
uniformly recurrent.

To see this, note that the image under a nonerasing morphism of a
uniformly recurrent infinite word is uniformly recurrent.  So it
suffices to apply $h$ to any uniformly recurrent binary word, such
as the Thue-Morse word $\bf t$ \cite{Allouche&Shallit:1999}.

Similarly, the image under a nonerasing morphism of a linearly
recurrent infinite word is linearly recurrent.

To see that we can find a $k$-automatic word with the desired 
properties, note that we can start with any $k$-automatic word
that is uniformly recurrent and aperiodic (for example,
the fixed point of $0 \rightarrow 0^{k-1} 1$ and
$ 1 \rightarrow 1 0^{k-1}$) and apply the morphism $h$
to it.

Finally, assume that $x_0$ and $x_1$ are chosen such that
for some $a\in \{ 0, 1\}$ we have $x_a$ starts with $a$.
Let $b = \{ 0, 1 \} - \{a\}$.  Write $g(a) = x_a x_b$ and
$g(b) = x_b x_a$.
Then $g^\omega(a)$, the infinite fixed point of $g$ starting with
$a$, is uniformly recurrent.

In what follows, we use the alphabet $\Sigma_k = \{ 0, 1, \ldots, k-1 \}$.
A-numbers in the paper refer to sequences from the {\it On-Line Encyclopedia of
Integer Sequences} \cite{Sloane:2019}.

\section{Minimizing the number of palindromes}

We define $d_{k,\ell} (n)$ to be the number of length-$n$ words
in $D_\ell(\Sigma_k)$.

\subsection{Alphabet size 2}

\begin{theorem}{(Fici-Zamboni)}
There are infinite binary words containing at most 9 palindromes.
All are periodic, and of the form $x^\omega$ for
$x$ a conjugate of either $001011$ or $001101$.
There are no infinite binary words containing at most 8 palindromes.
\label{thm2-9}
\end{theorem}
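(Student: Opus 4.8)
The plan is to reduce everything to a finite computation using the automaton-theoretic machinery set up in Sections~\ref{two}--\ref{three}. First I would invoke Corollary~\ref{palreg}(a) for $k = 2$ and $\ell = 9$: by that result $D_9(\Sigma_2)$ is regular, and in fact the proof shows it equals $\bigcup_{|S| \le 9,\ S \subseteq P_{\le 17}(\Sigma_2)} C_{\Sigma_2}(S)$, so a DFA for it can be constructed directly by the breadth-first state-generation procedure described after Corollary~\ref{palreg} (states $(x, S)$ with $x \in \Sigma_2^{\le 17}$ and $S$ the set of palindromic factors seen so far, killing any branch once $|S| > 9$). One runs this construction, then minimizes. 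The analogous construction with $\ell = 8$ yields the minimal DFA for $D_8(\Sigma_2)$.

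Next I would extract the structural claims from these two DFA's. For the $\ell = 8$ case: inspect the minimal DFA $M_8$ for $D_8(\Sigma_2)$ and check that it has \emph{no recurrent state} — equivalently, that the underlying transition graph (ignoring the dead state) is acyclic, so $D_8(\Sigma_2)$ is finite. By the result quoted from \cite{Fleischer&Shallit:2019}, the absence of a recurrent state means there is no infinite binary word with at most $8$ palindromic factors; this settles the last sentence of the theorem. For the $\ell = 9$ case: inspect the minimal DFA $M_9$ for $D_9(\Sigma_2)$, locate its recurrent states, and verify that every cycle in the trimmed graph (the subgraph of states lying on some infinite path) is a single loop of length $6$, so that every infinite word accepted by $M_9$ is eventually periodic with period $6$; then, since $D_9$ is factorial, the set of two-sided infinite (or purely periodic) words is exactly $\{x^\omega : x \text{ a conjugate of } w\}$ for the finitely many length-$6$ words $w$ that label these loops. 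Finally one checks by hand (or by a tiny sub-computation) that among the $\binom{6}{?}$ candidate length-$6$ binary words, exactly the conjugacy classes of $001011$ and $001101$ have all their powers in $D_9(\Sigma_2)$, i.e.\ $\mathrm{PalFac}(x^\omega)$ has size $\le 9$ — indeed one verifies $|\mathrm{PalFac}((001011)^\omega)| = |\mathrm{PalFac}((001101)^\omega)| = 9$ while every other length-$6$ conjugacy class (and every shorter period, which would force even fewer palindromes, already excluded by the $\ell=8$ case) fails.

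A couple of details deserve care. One is making sure the DFA for $D_9(\Sigma_2)$ really captures \emph{all} infinite words with $\le 9$ palindromic factors and not just those of length exactly bounded by the construction parameter: this is exactly the content of Corollary~\ref{palreg}(a), whose proof shows that a word in $D_\ell$ cannot contain any palindrome of length $\ge 2\ell$, so tracking palindromic factors of length $\le 2\ell-1 = 17$ suffices, and an infinite word lies in the language iff all its prefixes do iff it labels an infinite path in $M_9$. The second is the passage from ``every infinite path eventually cycles on a length-$6$ loop'' to ``the word is $x^\omega$ with $x$ a conjugate of $001011$ or $001101$'': here one uses factoriality — any factor of such a periodic word is again accepted, so the eventual period is forced to agree with one of the allowed primitive length-$6$ words up to conjugacy, and there is no room for a non-periodic ``preperiod'' because deleting a prefix only decreases the palindrome count, forcing the preperiodic part to already lie on the cycle.

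The main obstacle I expect is not conceptual but computational and presentational: the intermediate automaton before minimization can be large (the remark after Corollary~\ref{palreg} and the discussion of {\tt Grail} flag exactly this), so the real work is running the direct breadth-first construction of Section~\ref{two} efficiently enough and then \emph{certifying} the claimed properties of the minimized DFA's $M_8$ and $M_9$ — acyclicity of $M_8$, and the single-length-$6$-cycle structure of $M_9$ — in a way a reader can check. Everything after that (reading off the two conjugacy classes, confirming each gives exactly $9$ palindromic factors) is a short finite verification.
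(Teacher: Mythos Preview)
Your proposal is correct and follows essentially the same approach as the paper: build the DFA's for $D_8(\Sigma_2)$ and $D_9(\Sigma_2)$ via the direct breadth-first construction of Section~\ref{two}, verify that $M_8$ has no recurrent state while $M_9$ has recurrent but no birecurrent states, and read off the finitely many infinite paths (the paper reports 611/98 states for $M_9$ and 259/23 for $M_8$, with 12 recurrent states in $M_9$). One small wobble: your factoriality argument for excluding preperiods is backwards---showing that $x^\omega$ is accepted whenever $u x^\omega$ is does not by itself rule out $u x^\omega$---but this is moot, since the direct inspection of all paths from $q_0$ to the recurrent states, which you also propose, is what actually settles it.
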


\begin{proof}
We construct the DFA for $D_9 (\Sigma_2)$
as in Section~\ref{two}.  It has 611 states before
minimization and 98 after minimization, and we omit it
here.  No state is birecurrent,
but there are 12 recurrent states.   Examining the associated
paths easily gives the result.

To see the result for 8 palindromes, we can construct the 
DFA for $D_8(\Sigma_2)$.
It has 259 states before minimization and 23 after minimization.
No state is recurrent.  The longest word accepted is of length $8$.
Alternatively, one can prove this result using a simple breadth-first
search of the space of words.
\end{proof}

\begin{theorem}{(Restatement of Fici-Zamboni)}
There are exactly 40 infinite binary words containing exactly 10 palindromes.
All are ultimately periodic, and are of the following forms:
\begin{itemize}
\item $x^\omega$ for $x$ a conjugate of $0001011$, $0001101$, $0010111$, 
or $0011101$;
\item $y (001011)^\omega$ for $y \in \{ 0, 01, 111, 0011, 11011, 101011 \}$;
\item $y (001101)^\omega$ for $y \in \{ 0, 11, 001, 0101, 11101, 101101 \}$.
\end{itemize}
\end{theorem}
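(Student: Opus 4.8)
The plan is to proceed exactly as in the proof of Theorem~\ref{thm2-9}, by an explicit automaton computation. First I would construct the DFA for $D_{10}(\Sigma_2)$ using the breadth-first procedure of Section~\ref{two}: the states have the form $(x,S)$ with $x \in \Sigma_2^{\leq 19}$ and $S$ a set of nonempty palindromes of length at most $19$, a state is dead once $|S| > 10$, and we explore only the reachable part of the state space, then minimize. Once the minimal DFA is in hand, I would identify its recurrent and birecurrent states: since the theorem asserts there are no aperiodic infinite words of this type, I expect no state to be birecurrent, so every infinite path must eventually enter a cycle, and the infinite words with exactly $10$ palindromes are exactly the ultimately periodic words $y z^\omega$ read along such paths.

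Next I would enumerate, from the automaton, all infinite paths that correspond to words with \emph{exactly} $10$ palindromic factors (as opposed to at most $10$). Concretely, for each recurrent state $q$ I would determine the cycles through $q$ and the finite ``tails'' leading into them from the start state, keeping only those combinations for which the total set $S$ of accumulated palindromes has cardinality exactly $10$ (words with fewer than $10$ were already classified in Theorem~\ref{thm2-9}, so removing them is a bookkeeping step). The claim is that this yields precisely the three families listed: the purely periodic words $x^\omega$ with $x$ a conjugate of one of the four given length-$7$ words, together with the two finite lists of eventually periodic words with periods $001011$ and $001101$. I would verify each listed word indeed has exactly $10$ palindromic factors (a direct finite check) and that the path analysis produces no others, arriving at the count $40$ by summing: $4$ conjugacy classes each of size $7$, giving $28$, plus $6 + 6$ eventually periodic words, for a total of $40$.

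The main obstacle is the same as in any such result: trusting the automaton computation and making the transition from ``the DFA has these recurrent states and cycles'' to a clean, human-checkable list of $40$ words. In particular one must be careful that different finite prefixes $y$ do not lead to the same infinite word (so that the count is not inflated), that conjugates of $001011$ and $001101$ are genuinely distinct as infinite words and distinct from the conjugates of the length-$7$ periods, and that the ``exactly $10$'' condition is handled correctly when a prefix already contributes palindromes that also appear in the periodic part. I would resolve these by noting, as in the earlier theorem, that examining the associated paths is routine once the (small, post-minimization) DFA is available, and that the finitely many candidate words can each be checked mechanically; the role of the theorem statement is simply to record the output of that verification.
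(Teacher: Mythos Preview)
Your proposal is correct and follows essentially the same approach as the paper: construct the DFA for $D_{10}(\Sigma_2)$ (the paper reports 1655 states before minimization and 280 after), observe that no state is birecurrent, and read off the infinite paths, subtracting those already accounted for by Theorem~\ref{thm2-9}. Your added bookkeeping about the ``exactly 10'' condition and the count $28+6+6=40$ is more explicit than what the paper records, but the underlying method is identical.
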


\begin{proof}
We create the automaton for $D_{10} (\Sigma_2)$ as in Section~\ref{two}.
It has 1655 states before minimization and 280 after.  None of these
states are birecurrent.  By examining the possible infinite paths, we see
these include those of Theorem~\ref{thm2-9} and the ones listed above.
\end{proof}

\begin{theorem}
There are uncountably many aperiodic, uniformly
recurrent infinite binary words
containing exactly 11 palindromes.  
\end{theorem}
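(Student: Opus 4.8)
The plan is to proceed exactly as in the proofs of the preceding theorems, via a finite-automaton computation. First I would construct the minimal DFA $M$ for the language $D_{11}(\Sigma_2)$ using the breadth-first construction described at the end of Section~\ref{two}, recording the number of states before and after minimization.

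The key step is then to exhibit a \emph{birecurrent} state of $M$: a state $q$ together with two noncommuting words $x_0, x_1 \in \Sigma_2^*$ satisfying $\delta(q,x_0) = \delta(q,x_1) = q$. I expect this search to succeed — in contrast to the DFAs for $D_9(\Sigma_2)$ and $D_{10}(\Sigma_2)$, which have no birecurrent state — and I would list such a state and an explicit pair $x_0, x_1$. Granting this, the results recalled from \cite{Fleischer&Shallit:2019} immediately yield uncountably many aperiodic infinite binary words in $D_{11}(\Sigma_2)$, and moreover allow one to choose uncountably many of them to be uniformly recurrent: for instance by applying the morphism $h\colon 0 \mapsto x_0,\ 1 \mapsto x_1$ to the (uncountably many, uniformly recurrent, aperiodic) Sturmian words, or by using the fixed point $g^\omega(a)$ of the associated morphism $g$ after normalizing $x_0,x_1$ so that one begins with the appropriate letter and anchoring it by a path from the start state to $q$. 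Every such word lies in $D_{11}(\Sigma_2)$ and hence has \emph{at most} $11$ distinct palindromic factors.

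It remains to strengthen ``at most $11$'' to ``exactly $11$''. Here I would invoke the two preceding theorems: by Theorem~\ref{thm2-9} every infinite binary word with at most $9$ palindromic factors is periodic, and by the theorem following it every infinite binary word with exactly $10$ palindromic factors is ultimately periodic. Consequently every infinite binary word with at most $10$ palindromic factors is ultimately periodic, hence \emph{not} aperiodic. Since the words produced above are aperiodic, each must have strictly more than $10$, and therefore exactly $11$, distinct palindromic factors; this completes the argument.

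The main obstacle is practical rather than conceptual: the unminimized automaton for $D_{11}(\Sigma_2)$ is noticeably larger than those for $D_9(\Sigma_2)$ and $D_{10}(\Sigma_2)$, so some care is needed in the construction (minimizing aggressively, and enqueuing only states with at most $11$ accumulated palindromes) to make the birecurrence search feasible and to extract an explicit witness pair. Once a birecurrent state is in hand, everything else follows from results already in place. A minor point to verify for the explicit pair $x_0, x_1$ is that they genuinely do not commute, so that the morphic images are aperiodic; this is automatic from the definition of birecurrence, but should be confirmed for the witness actually exhibited.
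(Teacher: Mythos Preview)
Your approach is correct and essentially identical to the paper's: construct the DFA for $D_{11}(\Sigma_2)$ (the paper reports 5253 states unminimized, 810 minimized) and exhibit a birecurrent state (state~738, with $x_0 = 0001011001011$ and $x_1 = 001011001011$), then invoke the framework from \cite{Fleischer&Shallit:2019}. Your added justification that aperiodicity forces \emph{exactly} $11$ palindromes---via the preceding two theorems ruling out $\le 10$---is a detail the paper leaves implicit, so in that respect your outline is slightly more complete.
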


\begin{proof}
Using the method in Section~\ref{two}, we can construct the DFA
for $D_{11} (\Sigma_2)$.  It has 5253 states before
minimization, and 810 states afterwards, for
$D_{11} (\Sigma_2)$.
We do not give the latter automaton here, as it is
too large to display in a reasonable way, but it can be downloaded
from the second author's website at

State 738 is birecurrent, with
two paths labeled $x_0 = 0001011001011$ and
$x_1 = 001011001011$.  
\end{proof}

\begin{corollary}
The number of binary words containing at most 11 distinct palindromic
factors (including the empty word) is $(d_{2,11} (n))_{n \geq 0}$, where
\begin{displaymath}
\begin{split}
&(d_{2,11}(0), \ldots, d_{2,11}(41)) = (1,2,4,8,16,32,64,128,256,512,1024,292,270,268,
276,276,288, \\
& 320,340, 364, 388,404,428,476,512,560,610,644,692,768,840,924, 
1020,1100,1190,1316, \\
& 1452,1612, 1786,1952,2134,2348)
\end{split}
\end{displaymath}
and 
\begin{align*}
d_{2,11}(n) &= -d_{2,11}({n-1}) - d_{2,11}({n-2}) - d_{2,11}({n-3})
-d_{2,11}({n-4}) - d_{2,11}({n-5}) + 2d_{2,11}({n-6})  \\
&\quad + 4d_{2,11}({n-7}) + 5d_{2,11}({n-8}) + 5 d_{2,11}({n-9}) 
+ 5 d_{2,11}({n-10}) + 5 d_{2,11}({n-11})   \\
& \quad + 2 d_{2,11}({n-12}) -3 d_{2,11}({n-13}) + -6d_{2,11}({n-14})
-8 d_{2,11}({n-15}) -8 d_{2,11}({n-16}) \\
& \quad -8 d_{2,11}({n-17}) -7 d_{2,11}({n-18}) - 3 d_{2,11}({n-19}) +
3 d_{2,11}({n-21}) + 4d_{2,11}({n-22}) \\
& \quad + 4 d_{2,11}({n-23}) + 4 d_{2,11}({n-24})
+ 3 d_{2,11}({n-25}) +2 d_{2,11}({n-26}) + d_{2,11}({n-27})
\end{align*}
for $n \geq 42$.

Asymptotically, $d_{2,11}(n) \sim c \cdot \alpha^n$, where
$\alpha \doteq 1.1127756842787054706297$ is the largest positive
real zero of $X^7 - X - 1$ and
$c \doteq 20.665$.
\end{corollary}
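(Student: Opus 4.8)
The plan is to obtain the claimed results as a purely mechanical consequence of the minimal DFA $M$ for $D_{11}(\Sigma_2)$, exactly as in Section~\ref{three}. First I would take $M$ (the 810-state automaton already constructed) and form its transition-count matrix $N$, i.e.\ $N[i,j] = |\{a \in \Sigma_2 : \delta(q_i,a) = q_j\}|$, together with the vectors $v$ and $w$ as described in Section~\ref{three}; then $d_{2,11}(n) = vN^nw$. Computing the first $42$ values by repeated matrix--vector multiplication gives the listed initial conditions $(d_{2,11}(0),\ldots,d_{2,11}(41))$; one checks directly that they start $1,2,4,\ldots,1024$ (all words of length $\le 10$ are accepted, since a binary word of length $\le 10$ cannot contain more than $11$ distinct palindromic factors) and then drop sharply at $n=11$.

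Next I would produce the recurrence. Using \texttt{Maple}'s \texttt{LinearAlgebra} package I would compute the minimal polynomial $p(X)$ of $N$, so that $p(N)=0$ and hence $p$ annihilates $(d_{2,11}(n))_{n\ge 0}$ by the argument of Section~\ref{three}. Then I would run Algorithm LDA$(p,\mathbf{d}_{2,11})$ to strip off every irreducible factor of $p$ that is not needed, the correctness of which is guaranteed by the theorem just proved; dropping any factors of the form $X^n$ (as permitted, since we only report the recurrence for $n$ large), this yields the degree-$27$ annihilator whose coefficients are exactly the ones displayed, valid for $n \ge 42$. It is routine but worth noting that the initial segment must be long enough to seed the recurrence, which is why $42$ values are listed.

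Finally, for the asymptotics I would factor the characteristic (or minimal) polynomial over $\mathbb{Z}[X]$ and locate its dominant root. The claim is that this polynomial has $X^7 - X - 1$ as a factor and that the unique positive real zero $\alpha \doteq 1.11277\ldots$ of $X^7 - X - 1$ strictly dominates all other roots of the minimal polynomial in modulus; since $X^7-X-1$ is irreducible (it has no rational roots and a short check rules out factorizations into lower-degree integer polynomials) $\alpha$ is a simple eigenvalue, so $d_{2,11}(n) \sim c\,\alpha^n$ for the constant $c$ obtained by projecting $v$ and $w$ onto the corresponding eigenvector; numerically $c \doteq 20.665$. The main obstacle is the scale of the verification rather than any conceptual difficulty: one must trust (or independently recheck) the $810$-state automaton and the \texttt{Maple} minimal-polynomial and factoring computations, and one must confirm the strict dominance of $\alpha$ over the remaining $20$ roots of the degree-$27$ minimal polynomial --- a finite but delicate numerical check, since several subdominant roots may have modulus not much smaller than $1$.
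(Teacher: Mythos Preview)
Your proposal is correct and follows essentially the same approach as the paper: build the transition matrix of the DFA for $D_{11}(\Sigma_2)$, compute its minimal polynomial in \texttt{Maple}, apply Algorithm~LDA to obtain the minimal annihilator (which the paper displays explicitly as the product $(X-1)(X+1)(X^2+X+1)(X^2-X+1)(X^7-X-1)(X^6+X^5+X^4+X^3+X^2+X+1)(X^8-X^2-1)$), and read off the asymptotics from the dominant root of $X^7-X-1$. The paper uses the $811$-state automaton (including the dead state) rather than the $810$-state one you mention, but this is immaterial for the count $vN^nw$; your extra care about checking strict dominance of $\alpha$ is a point the paper treats only implicitly.
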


\begin{proof}
Using {\tt Maple}, we computed the minimal polynomial for the
matrix of the 811-state DFA described above.  It is
\begin{multline*}
X^{15} (X-1)(X-2)(X+1)(X^2 + 1)(X^2 + X + 1)(X^2 - X + 1)(X^7 - X - 1) \\
(X^4 + 1)(X^6 + X^5 + X^4 + X^3 + X^2 + X + 1)(X^8 - X^2 - 1) .
\end{multline*}
Next, using the procedure described in Section~\ref{three},
we can find the minimal annihilator of the recurrence.
It is
\begin{equation}
(X-1)(X+1)(X^2+X+1)(X^2-X+1)(X^7-X-1)(X^6 + X^5 + X^4 + X^3 + X^2 + X + 1)(X^8 - X^2 - 1).
\label{polys}
\end{equation}
When expanded, this gives the coefficients of the annhiliator of
the sequence $(d_{2,11}(n))_{n \geq 0}$, which are given above.

To get the asymptotic behavior of the recurrence, we must find the largest
real zero of the polynomials given in \eqref{polys}.  It is the largest
real zero of $X^7-X-1$, which is approximately $1.1127756842787054706297$.
\end{proof}

\begin{remark}
This is sequence \seqnum{A330127} in the OEIS.
\end{remark}

In their paper,
Fici and Zamboni constructed a uniformly recurrent aperiodic binary word
containing 13 palindromic factors, and
whose set of factors is closed under reversal.  We achieve the same
result using a different construction and a different proof.

\begin{theorem}
Define $G_0 = 001101000110$ and $G_{n+1} = G_n 01 G_n^R$ for $n \geq 0$.
Then $G_{\infty} = \lim_{n \rightarrow \infty} G_n$
is uniformly recurrent, aperiodic, and has 13 palindromic factors.
\end{theorem}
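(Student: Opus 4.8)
The plan is to establish the three assertions in the order (i) $G_\infty$ is well-defined and uniformly recurrent, (ii) $G_\infty$ is aperiodic, (iii) $G_\infty$ has exactly $13$ palindromic factors, with the third being where essentially all the work lies. Since $G_n$ is a prefix of $G_{n+1}=G_n\,01\,G_n^R$, the limit $G_\infty$ is a well-defined infinite word and every factor of $G_\infty$ is a factor of $G_N$ for some $N$. I would first record two structural facts. First, for $n\geq 1$ the word $G_n$ begins with the prefix $001101$ of $G_0$ and ends with $101100$, because $G_n=G_{n-1}\,01\,G_{n-1}^R$ has $G_{n-1}$ (hence $G_0$) as a prefix and $G_{n-1}^R$ as a suffix; consequently the length-$k$ window straddling the central block boundary of $G_{n+1}=G_n\,01\,G_n^R$ is a fixed word independent of $n$, once $|G_n|\geq k$. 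Second, for each fixed level $m\geq 1$, the word $G_\infty$ is an infinite concatenation of blocks, each equal to $G_m$ or to $G_m^R$, with consecutive blocks separated by one of $01$, $10$; this follows by induction from $G_{n+1}=G_n\,01\,G_n^R$ together with the observation that reversing such a decomposition again produces one (blocks $G_m^R$ or $G_m$, separators $10$ or $01$). In particular $G_{N+1}$ and $G_{N+1}^R$ both begin with $G_N$, so any $w\in\Fac(G_N)$ lies in both; combining this with the block decomposition at level $m=N+1$, $w$ occurs inside every level-$(N+1)$ block of $G_\infty$, hence with gaps bounded by roughly $3|G_{N+1}|$. This gives uniform recurrence.

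For aperiodicity, suppose toward a contradiction that $G_\infty=u\,v^{\omega}$ with $p=|v|$. Choose $n$ with $|G_n|\geq p$. In the eventually $p$-periodic tail, every occurrence of $G_n$ is followed by the same letter (namely the one $p$ positions earlier, which lies inside the occurrence since $|G_n|\geq p$, and hence depends only on $G_n$). But by the block decomposition and the uniform recurrence just proved, both $G_n\,01$ and $G_n\,10$ occur in $G_\infty$ infinitely often — the first from a central separator $01$, the second from a $10$ separator that appears at the next level — so $G_n$ is followed sometimes by $0$ and sometimes by $1$ arbitrarily far out, a contradiction.

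The heart of the matter is the palindrome count. Since palindromes are self-reverse, $\PalFac(G_n^R)=\PalFac(G_n)$, so $\PalFac(G_{n+1})=\PalFac(G_n)\cup\{\text{palindromic factors of }G_n\,01\,G_n^R\text{ that meet the central block}\}$, and $\PalFac(G_\infty)=\bigcup_n\PalFac(G_n)$. I would first prove a uniform bound: no $G_n$, and hence no factor of $G_\infty$, contains a palindrome of length exceeding a small constant $L$. This goes by induction on $n$; the only new case is a palindrome $q$ crossing the central boundary of $G_n\,01\,G_n^R$, and there palindromicity forces the mirror image $10$ of the central $01$ to occur within $|q|$ letters of it, while the boundary window is the fixed finite word pinned down by the first structural fact — inspecting that word shows this is impossible once $|q|>L$. (Alternatively, and more in the spirit of this paper, one may simply feed $G_\infty$ to the DFA for $D_{13}(\Sigma_2)$ from Corollary~\ref{palreg}(a): verify mechanically that reading $G_2,G_3,\dots$ keeps the automaton in accepting states — which, since the length-$\leq L$ factor set and the set of palindromes seen stabilize by the first structural fact, amounts to a finite computation — and that $G_\infty\notin D_{12}(\Sigma_2)$.) Granting the length bound, $\PalFac(G_\infty)$ is determined by the factors of $G_\infty$ of length $\leq L$, all of which already occur in $G_2$; hence $\PalFac(G_\infty)=\PalFac(G_2)$, and one checks directly that this set has exactly $13$ elements, a short prefix of $G_\infty$ already displaying all $13$ for the matching lower bound.

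I expect the single genuine obstacle to be the length bound on palindromic factors — excluding long palindromes that straddle the $01$ (and, at deeper levels, $10$) boundaries — since everything else is either bookkeeping with the recursive block decomposition or, if one prefers, delegated to the automaton of Corollary~\ref{palreg}.
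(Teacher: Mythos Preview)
Your treatment of uniform recurrence and aperiodicity is correct and more explicit than the paper's: the paper simply says uniform recurrence and closure under reversal are ``easy to see'' and left to the reader, and it obtains aperiodicity by citing \cite{Shallit:1982b} rather than arguing directly.  Your block-decomposition argument (that $G_\infty$ is, for each $m$, a concatenation of copies of $G_m$ and $G_m^R$ separated by $01$ or $10$, because $G_{n+1}=G_n\,01\,G_n^R$ and $G_{n+1}^R=G_n\,10\,G_n^R$) is the right structural observation, and your right-extension argument for aperiodicity ($G_n0$ and $G_n1$ both occur infinitely often) is sound.

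There is, however, a real gap in your primary route to the palindrome count.  The step ``the boundary window is the fixed finite word \dots inspecting that word shows this is impossible once $|q|>L$'' does not go through as stated.  If $q=s\,01\,t$ is a palindrome straddling the central $01$ with, say, $|s|\gg|t|$, then the forced $10$ in $q=q^R=t^R\,10\,s^R$ sits at position $|t|{+}1$ of $q$, which in $G_{n+1}$-coordinates lies deep inside $G_n$, \emph{not} inside your fixed $O(1)$ boundary window.  So window inspection alone cannot exclude long straddling palindromes; you would additionally need a periodicity argument (two reflections, one about the center of $q$ and one coming from the near-palindromic identity $G_{n+1}[k]=G_{n+1}[2|G_n|+3-k]$ for $k\notin\{|G_n|{+}1,|G_n|{+}2\}$, compose to a translation of step $\bigl||s|-|t|\bigr|$) together with the inductive hypothesis on $G_n$ to finish.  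That can be made to work, but it is not what you wrote.

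Your ``alternative'' via the DFA for $D_{13}(\Sigma_2)$ is exactly what the paper does, and the paper makes the finite computation precise in a way your sketch does not: it shows that the state-transformations satisfy $\tau_{G_n}=\tau_{G_{n+1}}$ for $n\geq 2$ and $\tau_{G_n}=\tau_{G_n^R}$ for $n\geq 1$, so that reading any $G_n$ (hence any prefix of $G_\infty$) keeps the automaton out of the dead state.  Saying that ``the length-$\leq L$ factor set stabilizes'' is true in spirit but circular here, because knowing it stabilizes presupposes the palindrome-length bound you have not yet established; the paper's fixed point is on the \emph{state transformation}, which sidesteps this.  The lower bound (that there really are $13$ palindromic factors, i.e.\ $G_\infty\notin D_{12}(\Sigma_2)$) is handled as you suggest, by exhibiting them in a short prefix.
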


\begin{proof}
We start by constructing the DFA for the language $D_{13}(\Sigma_2)$
using the method described in Section~\ref{two}.
This DFA $M$ has 93125 states before
minimization and 6522 states after minimization.  The unique
dead state is numbered 3012.

Next, we look at the transformations $\tau_n$ of states induced by the
words $G_n$.   We claim that
\begin{itemize}
\item $\tau_{G_n} = \tau_{G_{n+1}}$ for $n \geq 2$;
\item $\tau_{G_n} = \tau_{G_n^R}$ for $n \geq 1$.
\end{itemize}
which can be easily verified by induction using the transition
function for $M$.  

The resulting transformations of states for $n \geq 2$ are as follows:
\begin{align*}
0 &  \xrightarrow{\ G_n\ } 4882 \xrightarrow{\ 01 \ } 5058
 \xrightarrow{\ G_n^R\ } 4882  \\
0 & \xrightarrow{\ G_n^R\ } 4882 \xrightarrow{\ 10 \ } 5059 
 \xrightarrow{\ G_n\ } 4882
\end{align*}
Since these paths do not end in the unique nonaccepting state, the 
corresponding words contain at most $13$ palindromes.

It is easy to see that the word $G_\infty$ is uniformly recurrent and
closed under reversal.  This is left to the reader.

The fact that $G_\infty$ is not ultimately periodic follows
from \cite[Thm.~4]{Shallit:1982b}.
\end{proof}

\subsection{Alphabet size 3}

\begin{theorem}{(Fici-Zamboni)}
If a ternary infinite word contains $4$ palindromes (including
the empty word), it is necessarily of the form $(abc)^\omega$
for distinct letters $a,b, c$.  No ternary infinite word can contain
$3$ or fewer palindromes.
\end{theorem}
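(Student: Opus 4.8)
The plan is to follow the automaton-based template that the paper has already used for the binary cases, now specialized to the alphabet $\Sigma_3 = \{0,1,2\}$ and the threshold $\ell = 4$. First I would invoke Corollary~\ref{palreg}(a): the language $D_4(\Sigma_3)$ of ternary words with at most $4$ distinct palindromic factors is regular, and the construction in Section~\ref{two} gives an explicit DFA $M$ for it, with one dead (rejecting) state that we suppress. The existence claims about infinite words then reduce, by the results recalled from \cite{Fleischer&Shallit:2019}, to structural facts about $M$: an infinite word with the property exists iff $M$ has a recurrent state, and an aperiodic one exists iff $M$ has a birecurrent state.

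Next I would carry out the machine computation: build $M$ via the breadth-first procedure of Section~\ref{two} (states of the form $(x,S)$ with $x \in \Sigma_3^{\le 7}$ and $S$ a set of nonempty palindromes of length $\le 7$, since a word with $\le 4$ distinct palindromic factors cannot contain a palindrome of length $\ge 2\cdot 4 = 8$), minimize it, and inspect its structure. I expect to find that $M$ has no birecurrent state (so no aperiodic ternary word has only $4$ palindromes), and that the only cycles in the minimized automaton, together with the paths feeding into them, force any accepted infinite word to be exactly $(abc)^\omega$ for a permutation $(a,b,c)$ of $(0,1,2)$ — one checks directly that $(abc)^\omega$ has palindromic factor set $\{\varepsilon, a, b, c\}$, exactly $4$ palindromes, whereas any deviation from this periodic pattern introduces a length-$2$ or longer palindrome and pushes the count past $4$. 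For the second sentence I would build the DFA for $D_3(\Sigma_3)$ similarly, and verify it has no recurrent state at all (equivalently, it accepts only finitely many words), so no ternary infinite word can have $\le 3$ palindromes; alternatively a short breadth-first search over ternary words suffices, since $\{\varepsilon,a,b\}$ already forces the word to stay over a two-letter alphabet, where the Fici--Zamboni/earlier results on binary words with few palindromes (or a direct argument: over two letters any word of length $\ge 4$ contains a palindrome of length $\ge 2$, hence eventually more than $3$) close the case.

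The main obstacle is not conceptual but presentational and verificational: one must actually run the construction, report the state counts before and after minimization, locate the recurrent states, and read off from their incoming and outgoing paths that the only infinite accepted words are the $(abc)^\omega$; this is the step analogous to ``examining the associated paths easily gives the result'' in the proof of Theorem~\ref{thm2-9}, and its reliability rests on the mechanical check rather than on a human case analysis. A secondary point to be careful about is the bookkeeping of whether the empty word is counted among the $4$ (the statement says it is), which affects whether the search bound is $P_{\le 7}$ or $P_{\le 5}$ and whether $(abc)^\omega$ is on the boundary; since $(abc)^\omega$ has palindromic factors $\varepsilon, a, b, c$ and nothing else, the count is exactly $4$ with the empty word included, matching the claim. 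Once the automaton is in hand, both assertions follow immediately from the recurrent/birecurrent dichotomy, with no further argument needed.
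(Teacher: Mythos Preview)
Your proposal is correct and follows essentially the same approach as the paper: build the DFA for $D_4(\Sigma_3)$ via the breadth-first construction of Section~\ref{two}, minimize it, observe that the recurrent states are not birecurrent and that the only infinite paths spell out $(abc)^\omega$ for distinct $a,b,c$; then repeat for $D_3(\Sigma_3)$ (or use a direct search) to see no recurrent state exists. The paper additionally reports the concrete state counts ($52$/$18$ for $\ell=4$, $13$/$3$ for $\ell=3$) and displays the minimized automaton, but otherwise the arguments coincide.
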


\begin{proof}
We construct the DFA for $D_4(\Sigma_3)$
using the algorithm suggested in
Section~\ref{two}.  It has 52 states and,
when minimized, has 18 states.
It is depicted below in Figure~\ref{pal3num4}.  
Only the states numbered $12, 13, 14, 15, 16, 17$ are recurrent, and
none of them are birecurrent.  The desired result now easily follows from 
examining the possible paths through these states.

\begin{figure}[H]
\begin{center}
\includegraphics[width=5.5in]{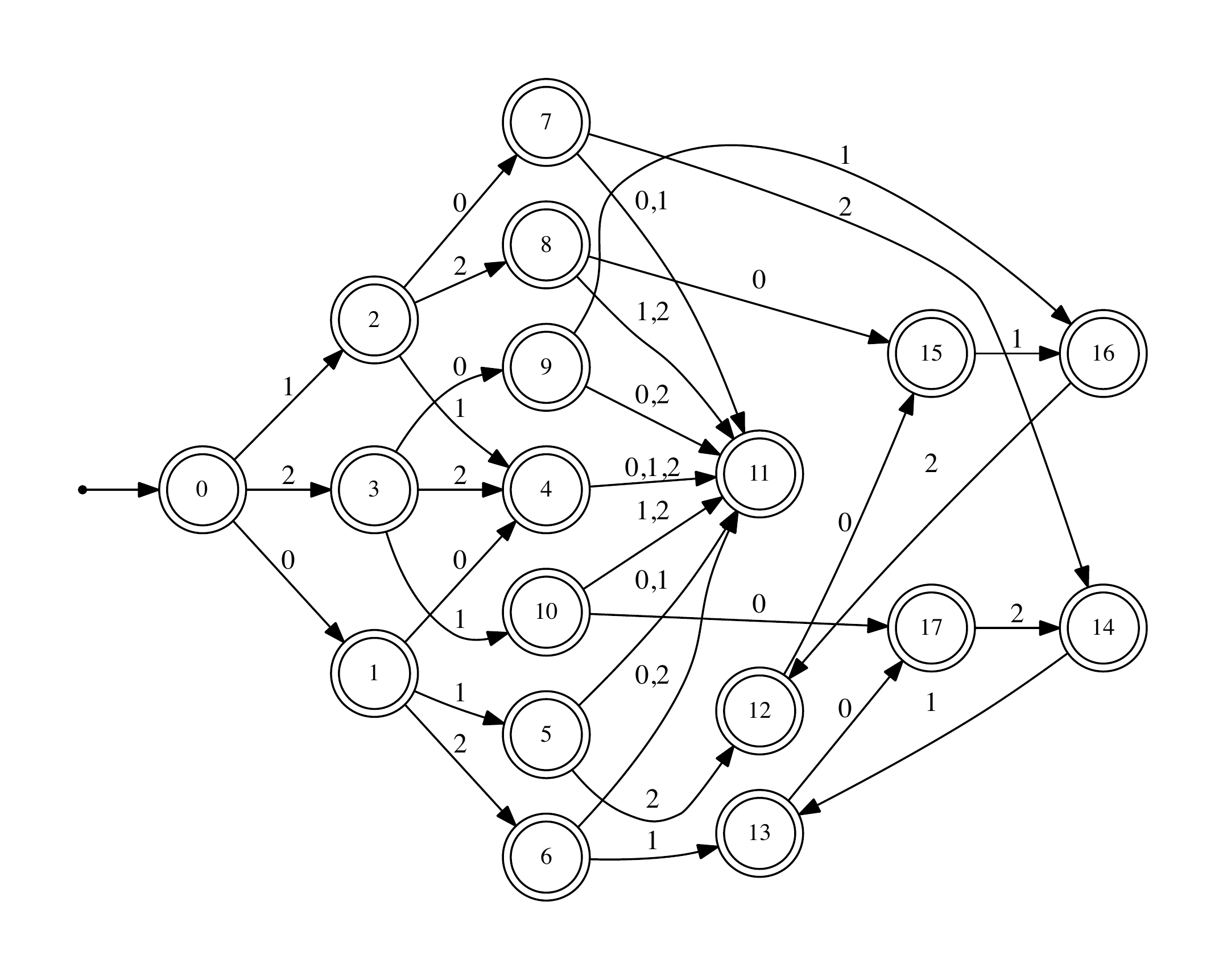}
\end{center}
\caption{Automaton for ternary words containing at most $4$ palindromes}
\label{pal3num4}
\end{figure}

To see that no ternary infinite word can contain $3$ or fewer palindromes,
we can perform the same construction as above, but for $3$ palindromes.
The resulting automaton has 13
states (3 when minimized) and no recurrent states.  We omit it here.
Alternatively, one can prove this result with a simple breadth-first
search of the space of words.
\end{proof}

\begin{theorem}
There are uncountably many aperiodic ternary words containing at most
$5$ palindromic factors.
\end{theorem}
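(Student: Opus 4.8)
The plan is to follow exactly the template used for the earlier theorems in this section, but now over the ternary alphabet $\Sigma_3$ with the bound $\ell = 5$. First I would construct the DFA $M$ for the language $D_5(\Sigma_3)$ by the direct breadth-first procedure of Section~\ref{two}: the states have the form $(x,S)$ with $x \in \Sigma_3^{\le 9}$ and $S$ a set of nonempty palindromes of length at most $9$; the start state is $(\varepsilon,\emptyset)$; on reading a letter $a$ from $(x,S)$ we move to $(y, S \cup \PalFac(xa))$, where $y$ is $xa$ (or its length-$9$ suffix if $|xa| = 10$); and any state whose palindrome set exceeds $5$ elements is identified with the dead state and not expanded. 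Having generated all reachable states, I would minimize the automaton.

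Next I would search $M$ for a birecurrent state, i.e., a state $q$ together with two noncommuting words $x_0, x_1$ (so $x_0 x_1 \ne x_1 x_0$, equivalently $\{x_0,x_1\}$ are not both powers of a common word) such that $\delta(q, x_0) = \delta(q, x_1) = q$. A short search over return words at each recurrent state should exhibit such a $q$ and such a pair; reporting one explicit pair $(x_0, x_1)$ makes the result mechanically checkable, exactly as state $738$ with $x_0 = 0001011001011$ and $x_1 = 001011001011$ did in the binary case. I would also verify the two words genuinely fail to commute, since this is the crucial point.

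Once a birecurrent state is found, the theorem follows immediately from the results recalled earlier from \cite{Fleischer&Shallit:2019}: a birecurrent state in the minimal DFA of a factorial language yields uncountably many aperiodic infinite words with the avoidance property, namely the images under the morphism $h \colon 0 \mapsto x_0,\ 1 \mapsto x_1$ of aperiodic binary words. (If desired, choosing the binary source to be the Thue-Morse word even gives uncountably many that are uniformly recurrent.) Every such ternary word labels a path that stays inside $M$, hence lies in $D_5(\Sigma_3)$ and contains at most $5$ distinct palindromic factors; since $h$ is nonerasing and $\{x_0,x_1\}$ is a code (a two-element set $\{u,v\}$ with $uv \ne vu$ is always a code), distinct aperiodic binary words have distinct images, and there are uncountably many of them.

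The only real obstacle is the computation itself: constructing the automaton correctly and, above all, confirming that it contains a birecurrent state rather than merely a recurrent one. A state all of whose return words pairwise commute would give only (ultimately) periodic words and would not prove the theorem, so the non-commutativity of the chosen $x_0, x_1$ must be checked explicitly. As with the other theorems in this section, this step is carried out by machine — the software emits the automaton in Grail format, and the minimization and loop search can be verified independently.
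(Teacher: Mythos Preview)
Your proposal is correct and follows exactly the paper's approach: construct the DFA for $D_5(\Sigma_3)$ by the breadth-first procedure of Section~\ref{two}, locate a birecurrent state, and invoke the companion paper's result that a birecurrent state yields uncountably many aperiodic infinite words via the morphism $h$. The paper simply reports the concrete outcome of the computation you describe---319 states before minimization, 69 after, with state 39 birecurrent via $x_0 = 0012$ and $x_1 = 012$---so all that is missing from your write-up is running the code (one small slip: the parenthetical about Thue--Morse would yield a single uniformly recurrent example, not uncountably many, but this is irrelevant since the theorem only asks for aperiodic).
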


\begin{proof}
We can construct the automaton for $D_5(\Sigma_3)$
as described in Section~\ref{two}.
It has 319 states before minimization and 69 states after.
We do not depict it here, as it is too large to visualize
clearly.   The state 39 is birecurrent, with paths labeled
$x_0 = 0012$ and $x_1 = 012$.
\end{proof}

\begin{corollary}
The number of ternary words containing at most $5$ palindromic
factors is $d_{3,5}(n) $, where
$(d_{3,5}(0), \ldots, d_{3,5}(8)) = (1,3,9,27,81,42,54,66,78)$ and
$d_{3,5}(n) = d_{3,5}(n-3) + d_{3,5}(n-4)$ for $n \geq 9$.
Asymptotically we have $d_{3,5}(n) \sim c \alpha^n$ where
$\alpha \doteq 1.2207440846$ and
$c \doteq 16.07007$.
\end{corollary}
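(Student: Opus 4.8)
The plan is to follow the same automaton-theoretic template used throughout this section, now applied to the DFA for $D_5(\Sigma_3)$ whose existence and basic parameters were established in the preceding theorem. First I would recall that the $69$-state minimal DFA $M$ accepting $D_5(\Sigma_3)$ is a factorial language's automaton, so all its nonaccepting behaviour is concentrated in the single (undisplayed) dead state. From $M$ I form the $69\times 69$ transfer matrix $M$ with $M[i,j] = |\{a \in \Sigma_3 : \delta(q_i,a) = q_j\}|$, as in Section~\ref{three}, and set $d_{3,5}(n) = v M^n w$ with $v$ the indicator row vector of the start state and $w$ the indicator column vector of the accepting states. The initial values $(d_{3,5}(0),\ldots,d_{3,5}(8)) = (1,3,9,27,81,42,54,66,78)$ are then read off by direct computation (or by enumerating short words: every ternary word of length $\le 4$ is accepted, hence the $3^n$ prefix, and the drop at $n=5$ reflects that some length-$5$ words already carry a sixth palindromic factor).

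Next I would compute, using \texttt{Maple}'s \texttt{LinearAlgebra} package exactly as described in Section~\ref{three}, the minimal polynomial of $M$, factor it over $\mathbb{Q}$, and then run \textsc{Algorithm LDA} to strip off the factors that do not genuinely contribute to the recurrence of $(d_{3,5}(n))_{n\ge 0}$ (in particular the spurious powers of $X$, which by the remark at the end of Section~\ref{three} we suppress once the recurrence is understood to start at $a(n)$ rather than $a(0)$). The claim is that the resulting lowest-degree annihilator is $X^4 - X - 1$, i.e. $p(X) = X^4 - X - 1$; expanding $p \circ \mathbf{d}_{3,5} = 0$ gives precisely $d_{3,5}(n) = d_{3,5}(n-3) + d_{3,5}(n-4)$ for $n \ge 9$. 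One then verifies the recurrence holds for enough windows past the initial segment (the theorem in Section~\ref{three} guarantees that checking the first $\deg q$ windows after applying each candidate divisor suffices), which certifies the recurrence for all $n$.

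For the asymptotics, $X^4 - X - 1$ is irreducible over $\mathbb{Q}$ (it has no rational roots and does not factor into two quadratics over $\mathbb{Z}$, which is a finite check), and a standard Descartes/derivative argument shows it has exactly one positive real root $\alpha$; numerically $\alpha \doteq 1.2207440846$. Since every factor of the minimal annihilator other than $X^4-X-1$ has all its roots of modulus $\le 1 < \alpha$ (the remaining factors being cyclotomic-type or small-degree pieces, each checkable by inspection), $\alpha$ is the unique dominant root and it is simple, so the general solution of the linear recurrence gives $d_{3,5}(n) = c\,\alpha^n + O(\rho^n)$ with $\rho < \alpha$. The constant $c$ is obtained by the usual partial-fraction / Vandermonde computation from the initial values, yielding $c \doteq 16.07007$. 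Finally, to see that $\alpha$ really is attained rather than cancelled, one notes that the coefficient of $\alpha^n$ in the closed form is nonzero because the sequence grows (indeed $d_{3,5}$ is eventually increasing and unbounded, consistent with the uncountably many aperiodic words produced by the birecurrent state $39$ in the previous theorem), so the leading term cannot vanish.

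The main obstacle is a bookkeeping one rather than a conceptual one: correctly identifying, among the irreducible factors of the minimal polynomial of a $69\times 69$ matrix, exactly which ones survive into the minimal annihilator of $(d_{3,5}(n))_{n\ge 0}$, and then confirming that none of the discarded factors secretly shares the dominant root $\alpha$. This is exactly what \textsc{Algorithm LDA} automates, so in practice the step reduces to a machine computation that can be verified mechanically; the only delicate point to state carefully in the write-up is that the dominance of $\alpha$ requires knowing that all other roots of the annihilator lie strictly inside the disc of radius $\alpha$, which follows since those roots are either roots of unity or roots of the low-degree cofactors, all of modulus at most $1$.
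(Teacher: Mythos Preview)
Your approach is essentially identical to the paper's: form the transfer matrix of the minimized DFA for $D_5(\Sigma_3)$, compute its minimal polynomial (the paper reports it explicitly as $X^5(X-1)(X-3)(X^2+X+1)(X^4-X-1)$), apply \textsc{Algorithm LDA} to obtain the minimal annihilator $X^4-X-1$, and read off the recurrence and asymptotics.

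One point in your write-up is muddled, though. You argue that ``every factor of the minimal annihilator other than $X^4-X-1$ has all its roots of modulus $\le 1$,'' but since the minimal annihilator \emph{is} $X^4-X-1$ and is irreducible, there are no other factors; you appear to be conflating the minimal annihilator with the minimal polynomial of the matrix. What you actually need for $d_{3,5}(n)\sim c\,\alpha^n$ is that $\alpha$ strictly dominates the \emph{other three roots} of $X^4-X-1$ itself. That is true (the second real root lies in $(-1,0)$, and the two complex roots have modulus $\approx 1.06<\alpha$), but your text never establishes it. Fix that sentence and the argument is complete.
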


\begin{proof}
The minimal polynomial of the corresponding matrix
is
$$ X^5 (X-1) (X-3) (X^2 + X + 1) (X^4 - X - 1) .$$
Using the method in Section~\ref{three}, we can find the minimal 
annihilator of the sequence, which is $X^4 - X - 1$.
The result now follows.
\end{proof}

\begin{remark}
This is sequence \seqnum{A329023} in the OEIS.
We have $d_{3,5}(n) = 6 \cdot$\seqnum{A164317}$(n)$ for $n \geq 5$.
\end{remark}

\section{Lengths of palindromes}

Instead of minimizing the total number of palindromes, 
Fici and Zamboni also considered minimizing the length of the
longest palindrome.  We can also do that with our method.

We define $e_{k,\ell} (n)$ to be the number of length-$n$ words
in $E_\ell (\Sigma_k)$.

\subsection{Alphabet size 2}

\begin{theorem}{(Restatement of Fici-Zamboni)}
There are exactly 20 infinite binary words having no palindromes
of length $>4$, and all are ultimately periodic.  They are 
as follows:
\begin{itemize}
\item $x^\omega$ for $x$ a conjugate of $001011$;
\item $x^\omega$ for $x$ a conjugate of $001101$;
\item $(0+00+111+1111)(001011)^\omega$;
\item $(0+00+11101+111101)(001101)^\omega$.
\end{itemize}
\label{pal2len4-thm}
\end{theorem}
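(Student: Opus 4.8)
The plan is to argue exactly as in the preceding binary results, such as the proof of Theorem~\ref{thm2-9}: construct the minimal DFA recognizing the language in question, verify that it has no birecurrent state (so that every infinite word with the property is ultimately periodic), and then read off all of its infinite paths.

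First I would build the automaton. By Corollary~\ref{palreg}(b) we have $E_4(\Sigma_2) = C_{\Sigma_2}(P_{\le 4}(\Sigma_2))$, so Theorem~\ref{one} applies with $\ell = 4$: the complement of $E_4(\Sigma_2)$ is $\bigcup_t \Sigma_2^*\, t\, \Sigma_2^*$, the union ranging over the $16$ binary palindromes of length $5$ or $6$. Using the Remark after Theorem~\ref{one}, I would pass to the minimal such $t$ under the factor ordering; the only non-minimal ones are $000000$ and $111111$ (dominated by $00000$ and $11111$), so the minimal forbidden set has $14$ elements, the longest of length $6$. I would then construct the DFA $M$ directly, in the style of the $D_\ell$ construction of Section~\ref{two}, by tracking a suffix of the input of length at most $5$, sending a state to the dead state as soon as a forbidden factor is completed, and keeping only the reachable states; minimizing yields a DFA for this factorial language with a single dead state, which we discard per the convention recalled earlier (the minimal DFA of a factorial language $\neq \Sigma^*$ has exactly one, dead, nonaccepting state).

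Next I would rule out aperiodicity and enumerate. By the result recalled from \cite{Fleischer&Shallit:2019}, aperiodic infinite binary words with no palindrome of length $>4$ exist iff $M$ has a birecurrent state; so the claim ``all are ultimately periodic'' is precisely the assertion that $M$ has no birecurrent state, equivalently that every strongly connected component of $M$ lying on an infinite path from the start state is a single directed cycle. Granting this, every one-sided infinite path eventually enters one of finitely many cycles, so it has the form $y z^\omega$ with $z$ labelling a cycle and $y$ a simple path from the start state into it. Reading the cycles and the finitely many admissible $y$ off of $M$ then yields exactly the two conjugacy classes of $001011$ and $001101$ (each contributing its $6$ cyclic shifts, since both words are primitive and the two classes are disjoint) together with the two prefixed families $(0+00+111+1111)(001011)^\omega$ and $(0+00+11101+111101)(001101)^\omega$; these $20$ words are pairwise distinct (none of the prefixed words is purely periodic, as that would force $0 = 1$, and $\varepsilon$ is not among the prefixes), for a total of $6 + 6 + 4 + 4 = 20$.

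The main obstacle is not conceptual but verificational: one must actually perform the automaton computation and, crucially, confirm the absence of any birecurrent state, so that the only recurrent behaviour in $M$ consists of the two length-$6$ cycles $\{$conjugates of $001011\}$ and $\{$conjugates of $001101\}$. A secondary bookkeeping point worth a separate hand check is the determination of the minimal forbidden set (the claim that only $000000$ and $111111$ collapse under the factor ordering), although this affects only the intermediate construction and not the final list of $20$ words.
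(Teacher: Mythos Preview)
Your proposal is correct and follows essentially the same approach as the paper: construct the DFA for $E_4(\Sigma_2)$, observe that it has no birecurrent state (so all infinite accepted words are ultimately periodic), and read off the finitely many infinite paths. The paper's own proof is a one-line appeal to the displayed automaton in Figure~\ref{pal2len4}, whereas you spell out the forbidden-factor construction and the distinctness count in more detail, but the underlying argument is identical.
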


\begin{proof}
The automaton for $E_4(\Sigma_2)$ is depicted in Figure~\ref{pal2len4},
and the only infinite paths are those given.  (There are no
birecurrent states.)
\end{proof}

\begin{figure}[H]
\begin{center}
\includegraphics[width=6.5in]{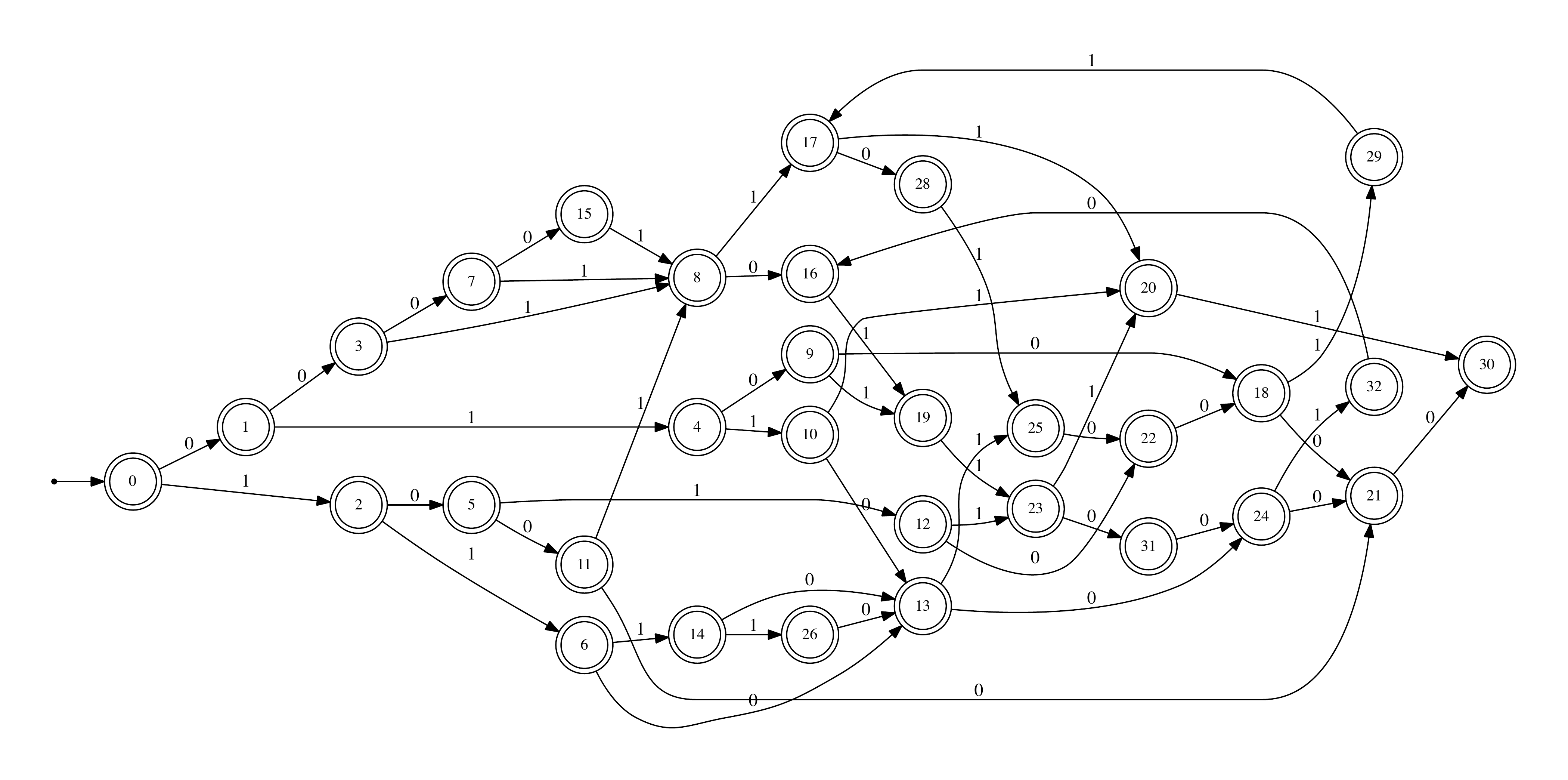}
\end{center}
\caption{Automaton for binary words containing no palindromes of
length $>4$}
\label{pal2len4}
\end{figure}

\begin{theorem}
There are uncountably many uniformly recurrent
binary words containing no palindromes of length $>5$.
They are the labels of the paths through the automaton
in Figure~\ref{pal2len5}.
\end{theorem}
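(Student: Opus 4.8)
The plan is to follow the automaton-based template used in the earlier theorems, now with $\ell = 5$. First I would construct the DFA for $E_5(\Sigma_2) = C_{\Sigma_2}(P_{\leq 5}(\Sigma_2))$ by the direct method of Section~\ref{two}: states carry the relevant bounded-length suffix together with the set of palindromic factors seen so far, one explores the reachable states breadth-first from $(\varepsilon, \emptyset)$, and one prunes as soon as a forbidden (length-$>5$) palindrome appears. Minimizing this DFA yields the automaton $M$ drawn in Figure~\ref{pal2len5}. Since $E_5(\Sigma_2)$ is factorial and regular, the correspondence recalled earlier in the paper (infinite words of a factorial regular language $\leftrightarrow$ infinite paths through its minimal DFA) shows that the one-sided infinite binary words with no palindromic factor of length $>5$ are exactly the labels of the infinite paths through $M$ from its initial state; this is the last sentence of the theorem.

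Next I would search $M$ for a birecurrent state, i.e.\ a state $q$ admitting two noncommuting words $x_0, x_1$ with $\delta(q, x_0) = \delta(q, x_1) = q$, found by inspecting the cycles through each state. I expect to exhibit such a $q$ together with explicit $x_0, x_1$; this is the one genuinely computational step, and the point to check carefully is that $x_0$ and $x_1$ are really noncommuting (not both powers of a common word), since that is what the aperiodicity rests on. Granting a birecurrent state, the results of \cite{Fleischer&Shallit:2019} recalled earlier immediately give uncountably many aperiodic infinite binary words avoiding palindromes of length $>5$.

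Finally I would strengthen ``aperiodic'' to ``uniformly recurrent, and uncountably many of them.'' Fix a word $u$ with $\delta(q_0, u) = q$ and let $h$ be the morphism $0 \mapsto x_0$, $1 \mapsto x_1$. For any infinite binary word $z$, the path labeled $u\, h(z)$ through $M$ never reaches the dead state, so every prefix of $h(z)$ is a factor of an accepted word, whence $h(z) \in E_5(\Sigma_2)$ by factoriality. Now apply $h$ to the elements of the Thue-Morse subshift: it is an infinite minimal subshift, hence uncountable, and each of its elements is uniformly recurrent (same set of factors as $\bf t$) and aperiodic. Each image $h(z)$ is uniformly recurrent, since a nonerasing morphic image of a uniformly recurrent word is uniformly recurrent, and avoids palindromes of length $>5$ by the previous paragraph. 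Injectivity of $h$ on this uncountable set --- immediate when $|x_0| = |x_1|$, and otherwise a consequence of $\{x_0, x_1\}$ being a code --- produces uncountably many distinct such words. The main obstacle is thus entirely the verification that $E_5(\Sigma_2)$ has a birecurrent state with noncommuting witnesses; everything else is a routine application of the Section~\ref{two} construction and the machinery recalled from \cite{Fleischer&Shallit:2019}.
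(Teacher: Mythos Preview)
Your proposal is correct and follows essentially the same approach as the paper: construct and minimize the DFA for $E_5(\Sigma_2)$ by the method of Section~\ref{two}, locate a birecurrent state, and invoke the machinery from \cite{Fleischer&Shallit:2019} and Section~4 to conclude uncountably many uniformly recurrent avoiding words. The paper's proof is simply ``As before,'' reporting $719$ states before minimization, $62$ after, and exhibiting the birecurrent state~$44$ with witnesses $x_0 = 01010110$ and $x_1 = 0010101110$; your write-up just spells out the template in more detail.
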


\begin{figure}[H]
\begin{center}
\includegraphics[width=6.5in]{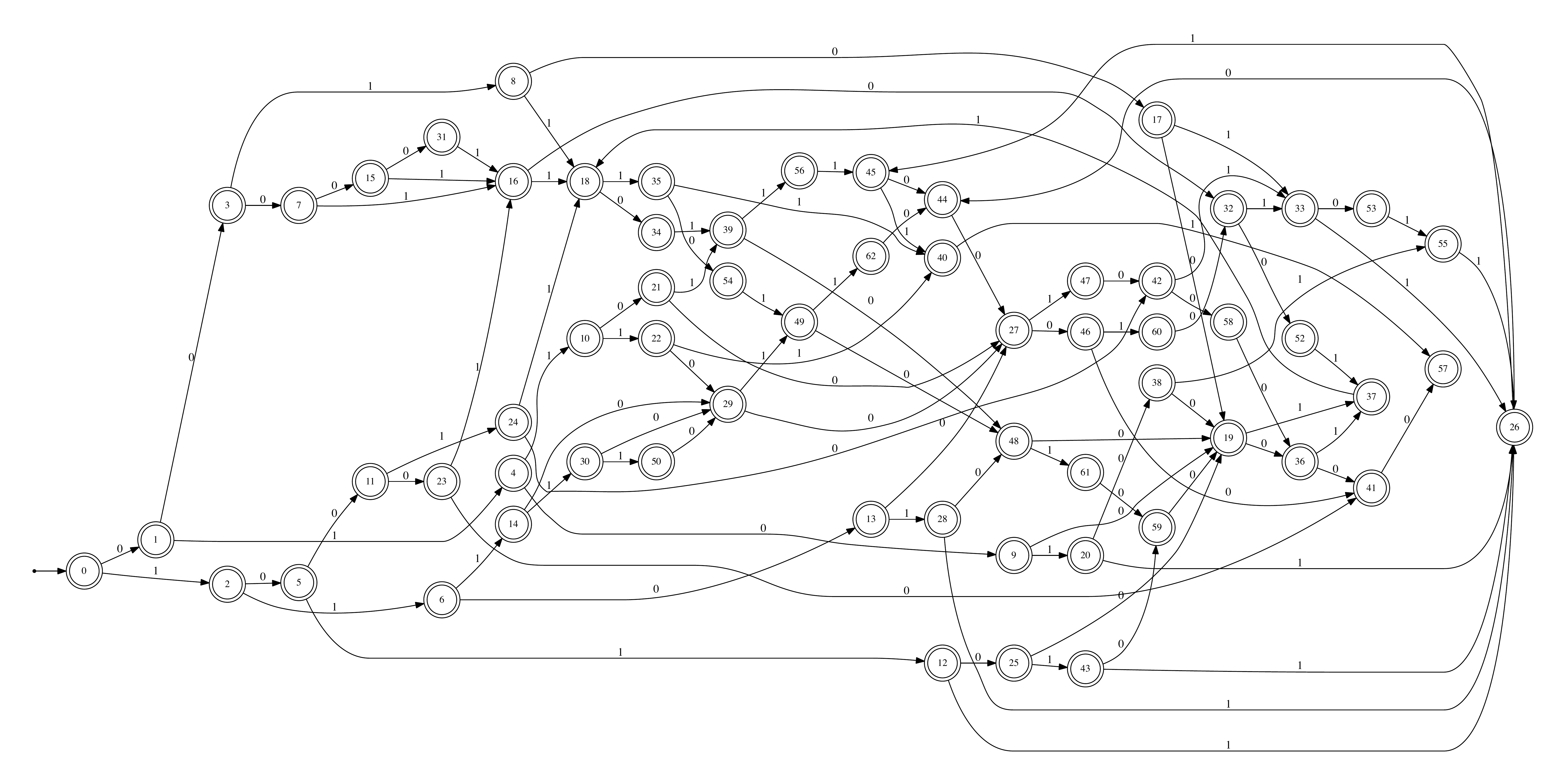}
\end{center}
\caption{Automaton for binary words containing no palindromes of
length $>5$}
\label{pal2len5}
\end{figure}

\begin{proof}
As before.  There are 719 states in the unminimized automaton 
for $E_5 (\Sigma_2)$
and 62 states in the minimized one. 
State 44 is birecurrent, with paths
$x_0 = 01010110$ and $x_1 = 0010101110$.
\end{proof}

\begin{theorem}
The sequence $(e_{2,5} (n))_{n \geq 0}$ counting the number of binary words of length $n$
containing no palindromes of length $>5$ satisfies the recurrence
$$e_{2,5}(n) = 3 e_{2,5}(n-6) + 2 e_{2,5}(n-7) + 2 e_{2,5}(n-8) + 2 e_{2,5}(n-9) + e_{2,5}(n-10)$$
for $n \geq 20$.  Asymptotically
$e_{2,5}(n) \sim c \alpha^n$ where $\alpha \doteq 
1.36927381628918060784\cdots$ is the positive real zero of the equation 
$X^{10}-3X^4-2X^3-2X^2-2X-1$, and $c = 9.8315779\cdots$.
\end{theorem}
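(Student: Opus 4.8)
The plan is to run the machine-assisted pipeline of Section~\ref{three} that was already applied to the corollaries on $d_{2,11}$ and $d_{3,5}$. First I would take the minimized $62$-state DFA $M$ for $E_5(\Sigma_2)=C_{\Sigma_2}(P_{\le 5}(\Sigma_2))$ from Figure~\ref{pal2len5}, together with its dead state, and build the integer transition matrix whose $(i,j)$ entry counts the letters of $\Sigma_2$ sending $q_i$ to $q_j$. As explained in Section~\ref{three}, $e_{2,5}(n)=vM^nw$ for the row vector $v$ selecting the start state and the column vector $w$ selecting the accepting states, so any annihilator of the minimal polynomial of this matrix is a linear recurrence for $(e_{2,5}(n))_{n\ge 0}$.

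Next I would compute the minimal polynomial $p(X)$ of the matrix with \texttt{Maple} and factor it into irreducibles. I expect $p(X)=X^{10}\,q_1(X)$, possibly times a few extra cyclotomic-type factors, where $q_1(X)=X^{10}-3X^4-2X^3-2X^2-2X-1$; the $X^{10}$ reflects the long transient inherent in the construction. Running Algorithm LDA on $p$ then strips the factor $X^{10}$ (harmless by the remark following the algorithm, since a power of $X$ only perturbs finitely many initial terms) together with any remaining proper divisor that still annihilates the relevant initial windows, leaving the minimal annihilator $q_1(X)$. Reading off the coefficients of $q_1$ gives exactly the claimed order-$10$ recurrence, and the stated range $n\ge 20$ is $\deg q_1$ plus the exponent $10$ of the $X$-factor removed from $p$ (mirroring the bound $n\ge 42=27+15$ in the $d_{2,11}$ corollary and $n\ge 9=4+5$ in the $d_{3,5}$ corollary).

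For the asymptotics I would work directly with $q_1(X)=X^{10}-3X^4-2X^3-2X^2-2X-1$. The generating function $\sum_{n\ge 0}e_{2,5}(n)X^n$ is rational with denominator the reciprocal polynomial of $q_1$ (after clearing the harmless $X$-factor), so a partial-fraction expansion expresses $e_{2,5}(n)$ as a combination of terms $n^j\beta^n$ over the roots $\beta$ of $q_1$. Descartes' rule of signs gives a unique positive real root $\alpha=1.36927\ldots$, and I would show it is strictly dominant: with $g(r)=r^{10}-3r^4-2r^3-2r^2-2r-1$ one has $g(\alpha)=0$, while $g''>0$ on $[1,\infty)$ and $g'(\alpha)>0$, hence $g(r)>0$, i.e.\ $r^{10}>3r^4+2r^3+2r^2+2r+1$, for every $r>\alpha$, so $q_1$ has no root of modulus exceeding $\alpha$; and at modulus exactly $\alpha$ the triangle inequality $|3\beta^4+2\beta^3+2\beta^2+2\beta+1|\le 3\alpha^4+2\alpha^3+2\alpha^2+2\alpha+1$ is an equality only if $\beta^4,\beta^3,\beta^2,\beta,1$ are simultaneously non-negative reals, forcing $\beta=\alpha$, which is a simple root since $g'(\alpha)\ne 0$. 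Therefore $e_{2,5}(n)=c\,\alpha^n+O(\rho^n)$ for some $\rho<\alpha$, where $c$ is the residue of the generating function at $X=1/\alpha$ (equivalently $c=\lim_{n\to\infty}e_{2,5}(n)/\alpha^n$); since $e_{2,5}(n)\to\infty$ we get $c>0$, and evaluating the residue against the first several terms of the sequence yields $c=9.8315779\ldots$.

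The bulk of this work — forming the matrix, computing and factoring the minimal polynomial, running Algorithm LDA, and expanding $q_1$ into the recurrence — is entirely routine and identical in spirit to the earlier corollaries. The one step carrying genuine content is the dominance claim for $\alpha$: that every other root of $q_1$ has strictly smaller modulus, so that the single-exponential asymptotic is valid and $c$ is well defined and nonzero. I would settle it by the elementary estimate above, although Perron--Frobenius applied to the strongly connected component of $M$ containing the birecurrent state $44$ would give an alternative once one checks that this component's transition matrix is primitive.
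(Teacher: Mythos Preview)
Your approach is essentially the paper's: build the transition matrix from the $62$-state DFA for $E_5(\Sigma_2)$, compute and factor its minimal polynomial, then run Algorithm~LDA to isolate the minimal annihilator $X^{10}-3X^4-2X^3-2X^2-2X-1$, from which the recurrence and the bound $n\ge 20$ follow exactly as you explain. One correction to your expectation: the minimal polynomial is not $X^{10}q_1(X)$ times ``cyclotomic-type'' factors but rather
\[
X^{10}(X-2)\bigl(X^{10}+X^4-2X^3-2X^2-2X-1\bigr)\bigl(X^{10}-3X^4-2X^3-2X^2-2X-1\bigr),
\]
so LDA must discard an $(X-2)$ and a second degree-$10$ irreducible, not merely cyclotomic pieces; this does not affect your procedure, only your prediction of its output. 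Your elementary dominance argument for $\alpha$ (convexity of $g$ on $[1,\infty)$, triangle-inequality equality case, simplicity via $g'(\alpha)>0$) is correct and is something the paper simply asserts without proof, so here you supply more than the paper does.
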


\begin{proof}
The minimal polynomial of the corresponding matrix is
\begin{equation}
 X^{10} (X-2) (X^{10} + X^4 - 2X^3 -2X^2 -2X - 1)(X^{10} -3X^4 -2X^3 -2X^2 -2X - 1).
 \label{pal35}
\end{equation}
The technique described in Theorem~\ref{three} can be used to
find the minimal annihilator for the recurrence.  It is the last term
in the factorization \eqref{pal35}.
\end{proof}

\begin{remark}
The sequence $e_{2,5} (n)$ is sequence \seqnum{A329824} in the OEIS.
\end{remark}

\subsection{Alphabet size 3}

\begin{theorem}{(Fici-Zamboni)}
The only infinite ternary words having no palindromes of length
$>1$ are those of the form $(abc)^\omega$ for distinct letters
$a,b,c$.    
\end{theorem}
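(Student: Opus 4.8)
The plan is to reuse the automaton-based method from Section~\ref{two} exactly as in the previous theorems of this section. First I would invoke Corollary~\ref{palreg}(b): the set $E_1(\Sigma_3)$ of ternary words containing no palindrome of length $>1$ is regular, being equal to $C_{\Sigma_3}(P_{\leq 1}(\Sigma_3))$, where $P_{\leq 1}(\Sigma_3) = \{\varepsilon, 0, 1, 2\}$. Then I would construct the minimal DFA for this language using the direct breadth-first construction described after Corollary~\ref{palreg}, starting from $(\varepsilon, \emptyset)$ and killing any state whose accumulated palindrome set forces a forbidden palindrome of length $2$ (that is, any factor $aa$) or longer.

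The key observation is that avoiding all palindromes of length $2$ means avoiding every factor $aa$ with $a \in \Sigma_3$; but avoiding palindromes of length $3$ additionally rules out every factor $aba$. So a word in $E_1(\Sigma_3)$ has the property that every factor of length $3$ uses three distinct letters. I would then argue directly: if $w$ is an infinite ternary word with no palindrome of length $>1$, consider any length-$3$ window $w_i w_{i+1} w_{i+2}$; it must be a permutation of $\{a,b,c\}$. The next letter $w_{i+3}$ cannot equal $w_{i+2}$ (no $aa$) and cannot equal $w_{i+1}$ (else $w_{i+1}w_{i+2}w_{i+3}$ would be a palindrome $dcd$); hence $w_{i+3} = w_i$. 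By induction the word is forced to be periodic with period $w_i w_{i+1} w_{i+2}$, i.e. of the form $(abc)^\omega$ for distinct $a,b,c$. Conversely each such word clearly has palindromic factors only among $\{\varepsilon,a,b,c\}$, so the characterization is exact. The automaton simply records this: after reading two distinct letters the third is forced, and the minimal DFA has a single recurrent cycle of length $3$ for each of the $6$ orderings, with no birecurrent state.

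I do not expect any serious obstacle here; this is the smallest and most transparent case in the section. The only thing to be careful about is the bookkeeping between "no palindrome of length $2$'' (forbidding squares of single letters) and "no palindrome of length $3$'' (forbidding $aba$), and making sure the combined constraint forces the window of length $3$ to be a full permutation. The machine-verification route would be: build the DFA for $E_1(\Sigma_3)$ via the method of Section~\ref{two}, observe it has only a handful of states after minimization, check that the only infinite paths are the six cycles $(abc)^\omega$, and note there are no birecurrent states so no aperiodic infinite word exists. This matches the pattern of proof used for Theorem~\ref{pal2len4-thm} and the ternary results above, so I would present it in the same one-paragraph style: "We construct the DFA for $E_1(\Sigma_3)$ as in Section~\ref{two}; it has only finitely many states, the only infinite paths are the claimed ones, and there are no birecurrent states.''
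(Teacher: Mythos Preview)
Your proposal is correct. The automaton-based portion---build the DFA for $E_1(\Sigma_3)$, observe there are no birecurrent states, and read off the infinite paths---is exactly what the paper does (the paper reports 16 states unminimized, 10 minimized, and otherwise says nothing more).

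Where you differ is that you also supply a direct combinatorial argument the paper omits entirely: from ``no $aa$'' and ``no $aba$'' you deduce every length-$3$ window is a permutation of $\{0,1,2\}$, and then $w_{i+3}=w_i$ is forced, giving period $3$. This is a genuinely different and more elementary route; it requires no machine computation and makes the result self-contained. The paper's approach, by contrast, is uniform across all the theorems in the section and scales to the larger cases where such a by-hand argument would be infeasible. Either proof is fine here; your combinatorial argument is arguably preferable for this smallest case, while the automaton version keeps the presentation consistent with the rest of the paper.
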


\begin{proof}
The automaton for $E_1(\Sigma_3)$
has 16 states before minimization and 10 states after.
We omit it here.  There are no birecurrent states, and
the only infinite paths are those given.
\end{proof}

\begin{theorem}
There are uncountably many ternary words containing no palindromes
of length $>2$.
\label{pal3len2-thm}
\end{theorem}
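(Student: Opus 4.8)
The plan is to follow the same automaton-theoretic recipe used for the analogous binary statements earlier in the paper. First I would construct the minimal DFA $M$ for the language $E_2(\Sigma_3) = C_{\Sigma_3}(P_{\leq 2}(\Sigma_3))$ using the direct method of Section~\ref{two}, or equivalently via Theorem~\ref{one} with $S = P_{\leq 2}(\Sigma_3) = \{\varepsilon, 0, 1, 2, 00, 11, 22\}$; by the Remark following Theorem~\ref{one} one only needs to exclude the minimal forbidden palindromes, which here are $aaa$, $aba$, and $abba$ with $a \ne b$. I would record the number of states of $M$ before and after minimization, and then search $M$ for a birecurrent state: a state $q$ together with two noncommuting words $x_0, x_1$ satisfying $\delta(q, x_0) = \delta(q, x_1) = q$. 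Concretely this means exhibiting two short ternary words $x_0, x_1$, neither a power of the other, such that following the transitions of $M$ from $q$ on each of them returns to $q$; the finite bookkeeping in $M$ certifies that every concatenation of copies of $x_0$ and $x_1$ stays inside $E_2(\Sigma_3)$.

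Second, having located such a birecurrent state, I would invoke the result recalled earlier from the companion paper~\cite{Fleischer&Shallit:2019}: a birecurrent state forces the existence of uncountably many aperiodic infinite words with the avoidance property, namely the images under the morphism $h \colon 0 \mapsto x_0,\ 1 \mapsto x_1$ of arbitrary aperiodic binary words. Applying $h$ to a uniformly recurrent binary word such as the Thue--Morse word even yields uncountably many uniformly recurrent examples, so the theorem follows at once. If a self-contained version were desired, one could instead present an explicit such pair $x_0, x_1$ (or an explicit primitive morphism) and verify by hand that the concatenations occurring in $h(\mathbf{t})$ introduce no palindrome of length $3$ or $4$; since a word of length $\geq 5$ is a palindrome only if it has a palindromic factor of length $3$ or $4$ at its centre, checking lengths $3$ and $4$ suffices, so this hand-verification is short.

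The computations themselves are routine, so the only genuine work is (i) getting the automaton construction right --- in particular noting that avoiding palindromes of length $>2$ is equivalent to avoiding the finite set $\{000,111,222\}\cup\{aba : a\ne b\}\cup\{abba : a\ne b\}$, and thus applying Theorem~\ref{one} with $\ell = 2$ --- and (ii) certifying that the two return words $x_0$ and $x_1$ genuinely fail to commute, i.e.\ that $x_0 x_1 \ne x_1 x_0$, since commuting return words would produce only a single eventually periodic word rather than uncountably many. I expect step (ii), the verification of birecurrence, to be the main (though still modest) obstacle; everything else is bookkeeping handled by the software described in Section~\ref{two}.
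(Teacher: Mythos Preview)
Your proposal is correct and follows exactly the paper's own approach: construct the DFA for $E_2(\Sigma_3)$ via the method of Section~\ref{two}, locate a birecurrent state, and invoke the companion-paper result to obtain uncountably many aperiodic (even uniformly recurrent) infinite words. The paper carries out precisely this computation, reporting $67$ states unminimized, $19$ minimized, with state~$6$ birecurrent via $x_0 = 211002$ and $x_1 = 11002$; one small wording slip in your outline is that ``neither a power of the other'' is not the same as noncommuting (the correct criterion, which you also state, is $x_0 x_1 \ne x_1 x_0$).
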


\begin{proof}
We can construct the automaton for
$E_2(\Sigma_3)$ as in Section~\ref{two}.  It has 67 states
unminimized and 19 states when minimized.  
State $6$ is birecurrent, with paths labeled
$x_0 = 211002$ and $x_1 = 11002$.

\begin{figure}[H]
\begin{center}
\includegraphics[width=6in]{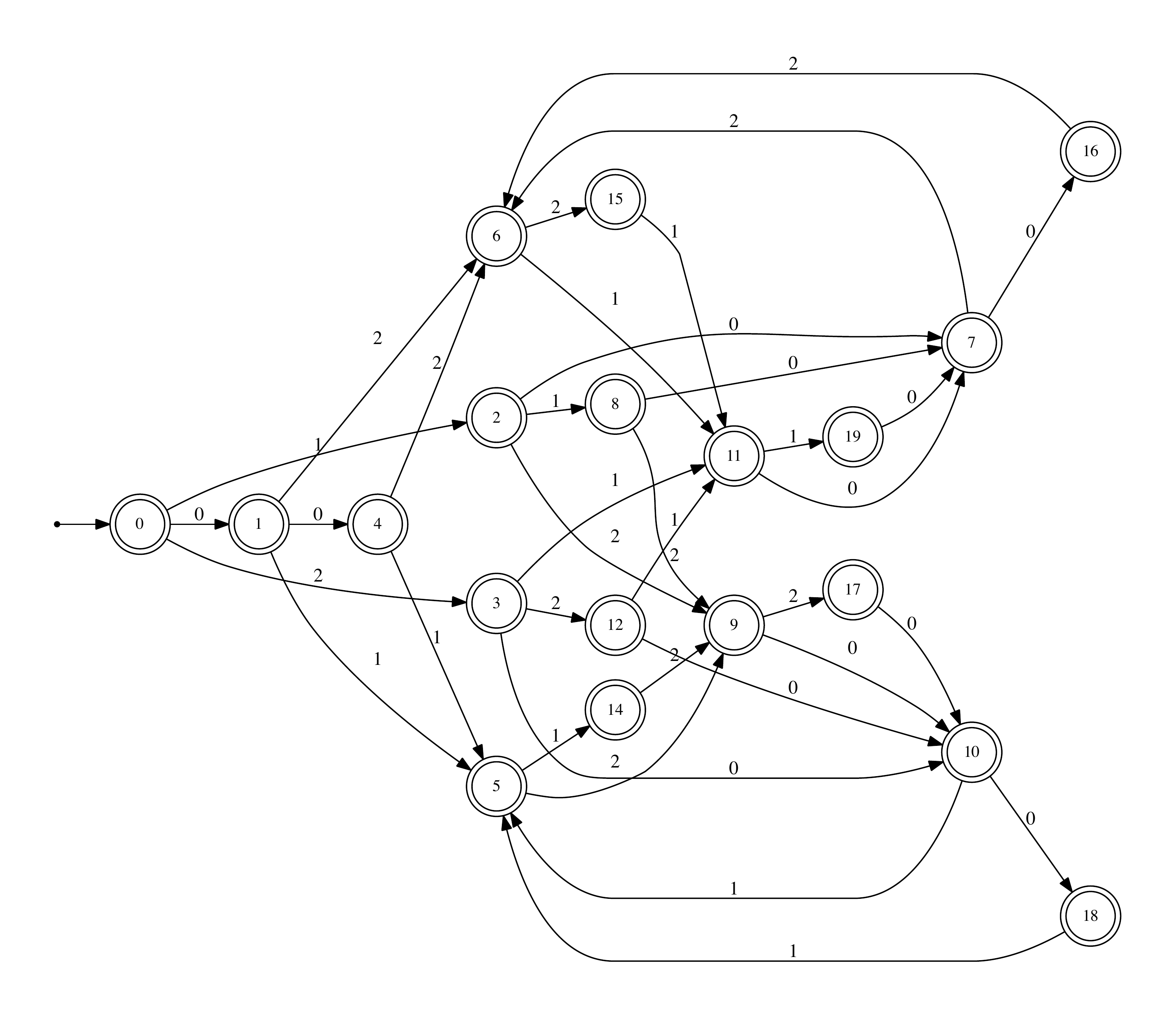}
\end{center}
\caption{Automaton for ternary words containing no palindromes of
length $>2$}
\label{pal3len2}
\end{figure}
\end{proof}

The Fibonacci numbers $F_n$ are defined by $F_0 = 0$ and $F_1 = 1$ and
$F_n = F_{n-1} + F_{n-2}$.

\begin{corollary}
The number $e_{3,2}(n)$ of length-$n$ ternary words containing no
palindromes of length $>2$ is $6F_{n+1}$ for $n \geq 3$.  
\end{corollary}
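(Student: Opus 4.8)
The plan is to follow the automaton-and-recurrence method of Section~\ref{three}, exactly as was done for $d_{3,5}$ and $e_{2,5}$. First I would take the minimal DFA for $E_2(\Sigma_3) = C_{\Sigma_3}(P_{\le 2}(\Sigma_3))$ from Theorem~\ref{pal3len2-thm} (the $19$-state automaton of Figure~\ref{pal3len2}), form its transition-count matrix $M$ together with the vectors $v,w$ of Section~\ref{three}, so that $e_{3,2}(n) = vM^n w$. Using {\tt Maple} I would then compute the minimal polynomial of $M$ and factor it; I expect it to be of the form $X^k(X^2-X-1)$ for a small $k$ (the eigenvalue $0$ contributing only a finite transient, and the factor $X^2-X-1$ being the Fibonacci characteristic polynomial). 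It then follows, as in Section~\ref{three}, that $(e_{3,2}(n))_{n\ge 0}$ satisfies $e_{3,2}(n) = e_{3,2}(n-1) + e_{3,2}(n-2)$ for all $n$ past the transient, and reading off $k$ shows this holds for $n \ge 5$; note that no relation of this form can hold already at $n=3$, since $e_{3,2}(1) + e_{3,2}(2) = 3 + 9 = 12 \ne 18 = e_{3,2}(3)$.

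Second, I would nail down the two base cases by a direct count. A length-$3$ ternary word has a palindromic factor of length $>2$ precisely when its first and last letters coincide, so $e_{3,2}(3) = 3\cdot 3\cdot 2 = 18 = 6F_4$. For length $4$ one removes the $9+9$ words with $w_1=w_3$ or $w_2=w_4$ and re-includes the overlap by inclusion--exclusion (or simply reads the value off the automaton), obtaining $e_{3,2}(4) = 30 = 6F_5$. Since $(6F_{n+1})_{n\ge 0}$ also satisfies $a(n) = a(n-1) + a(n-2)$, and $e_{3,2}$ and $6F_{n+1}$ agree at $n=3$ and $n=4$, a one-line induction using the recurrence (valid for $n \ge 5$) gives $e_{3,2}(n) = 6F_{n+1}$ for all $n \ge 3$.

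The factor of $6$ admits a conceptual explanation worth recording. The symmetric group on $\Sigma_3$ acts on $E_2(\Sigma_3)$ (relabeling preserves palindromes and factors), and for $n\ge 3$ it acts \emph{freely} on the length-$n$ words: such a word uses at least two distinct letters, since $a^n$ contains the length-$3$ palindrome $aaa$ once $n\ge 3$, and a permutation of a $3$-letter alphabet fixing two of its symbols is the identity. Hence $6 \mid e_{3,2}(n)$ for $n\ge 3$, and $e_{3,2}(n)/6$ is the number of orbits; the corollary asserts this orbit count equals $F_{n+1}$. One even sees the Fibonacci behaviour locally: in a word of $E_2(\Sigma_3)$ each letter $w_i$ with $i\ge 3$ must differ from $w_{i-2}$, and if moreover $w_{i-1}=w_{i-2}$ then it must also differ from $w_{i-3}$ (which itself differs from $w_{i-2}$), so $w_i$ is forced when $w_{i-1}=w_{i-2}$ and has exactly two admissible values otherwise --- a ``squares-and-dominoes'' choice pattern whose count is Fibonacci.

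The main obstacle is the bookkeeping around the transient: one must confirm that the minimal polynomial of $M$ really is $X^k(X^2-X-1)$ with no extra cyclotomic-type factor, so that the Fibonacci recurrence already holds from $n=5$ on; this rests on the explicit {\tt Maple} computation. Everything else --- the two base-case counts and the inductive step --- is routine.
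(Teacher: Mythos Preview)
Your approach is exactly the paper's: build the matrix $M$ from the $19$-state automaton of Figure~\ref{pal3len2}, compute its minimal polynomial, extract the minimal annihilator, and match initial values. The one point on which your expectation is off is the shape of the minimal polynomial. It is \emph{not} $X^k(X^2-X-1)$; the paper's computation gives
\[
X^3\,(X-3)\,(X^2-X-1)\,(X^4+X^3+2X^2+2X+1),
\]
so there are genuine extra factors (the eigenvalue $3$ from the full-shift part, and a degree-$4$ factor). These do not survive to the sequence: running Algorithm~LDA from Section~\ref{three} eliminates them, leaving the minimal annihilator $X^2-X-1$ (together with the $X^3$ shift you anticipated), and the Fibonacci recurrence then holds for $n\ge 5$ as you said. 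So your plan works, but only after the LDA step you were hoping to skip; the ``no extra cyclotomic-type factor'' check fails as stated.

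Your added material --- the free $S_3$-action explaining the factor $6$ for $n\ge 3$, and the local two-choices-or-forced analysis giving the Fibonacci recurrence directly --- is a genuine bonus over the paper's purely computational proof. In fact that second argument, once the transitions $P_{n+1}=Q_n$, $Q_{n+1}=P_n+Q_n$ (with $P,Q$ counting words whose last two letters are equal, resp.\ distinct) are written out, yields $e_{3,2}(n)=e_{3,2}(n-1)+e_{3,2}(n-2)$ for $n\ge 5$ by hand and makes the {\tt Maple} step unnecessary; you might consider promoting it from a remark to the proof itself.
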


\begin{proof}
The minimal polynomial of the matrix is $X^3 (X-3)(X^2 - X - 1)(X^4 +
X^3 + 2X^2 + 2X + 1)$.  The minimal annihilator is $X^2 - X - 1$.
The result now follows easily.
\end{proof}

\begin{remark}
The sequence $e_{3,2}(n)$
is sequence \seqnum{A330010} in the OEIS.
\end{remark}

\subsection{Alphabet size 4}

Fici and Zamboni proved that, over the alphabet
$\Sigma_4$, there is an infinite aperiodic 
uniformly recurrent word whose only palindromes
are $\varepsilon, 0, 1, 2, 3$.
We show how to handle this
using our method.

\begin{theorem}
There is an infinite aperiodic uniformly recurrent word over
$\Sigma_4$ whose only palindromes are
$\varepsilon, 0, 1, 2, 3$.
\end{theorem}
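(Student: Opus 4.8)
The plan is to follow exactly the template used throughout the paper for such statements: reduce the existence question to a structural property of a finite automaton, and then verify that property by a machine computation. First I would invoke Corollary~\ref{palreg}(b) (or the case $S = \{\varepsilon,0,1,2,3\}$ of Theorem~\ref{one}, since each of $0,1,2,3$ is a palindrome): the language $E_1(\Sigma_4)$ of all finite words over $\Sigma_4$ whose only palindromic factors are $\varepsilon,0,1,2,3$ is regular, and its minimal DFA $M$ is factorial with a single dead state. I would construct $M$ by the breadth-first algorithm described in Section~\ref{two}, report its size before and after minimization, and suppress the dead state in the count as stipulated in Section~4.

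The key step is then to exhibit a \emph{birecurrent} state of $M$: a state $q$ together with two noncommuting words $x_0, x_1$ with $\delta(q, x_0) = \delta(q, x_1) = q$. By the result quoted from \cite{Fleischer&Shallit:2019} in Section~4, the existence of such a state immediately yields uncountably many aperiodic infinite words avoiding $S$; and by the further discussion there, applying the morphism $h: 0 \mapsto x_0,\ 1 \mapsto x_1$ to a uniformly recurrent aperiodic binary word — for instance the Thue--Morse word, or the fixed point of $g(a) = x_a x_b$, $g(b) = x_b x_a$ as set up at the end of Section~4 — produces an infinite word that is uniformly recurrent, aperiodic, and has only $\varepsilon,0,1,2,3$ as palindromic factors. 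So after locating the birecurrent state I would simply name the two loop words explicitly (e.g., something like $x_0 = 0123\cdots$, $x_1 = \cdots$, with one a conjugate-style shift of the other so that they are non-powers of a common word) and cite Section~4 to conclude.

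The one thing worth a sentence of care is checking that $x_0$ and $x_1$ actually do \emph{not} commute, i.e.\ are not both powers of a common word — this is what distinguishes ``aperiodic'' from merely ``infinite''. In practice the loop words found at a birecurrent state have different lengths and obviously incommensurable structure, so this is immediate by inspection once they are written down, but it is the hypothesis that must be verified rather than assumed. The main obstacle, such as it is, is purely computational: the unminimized automaton for $E_1(\Sigma_4)$ may be sizeable, so the proof rests on a machine computation (the Dyalog APL / \texttt{Grail} pipeline already described) rather than on a hand argument, and the honest statement of the proof is that this computation was performed and its output — the state count, the identified birecurrent state, and the two loop words — is as reported. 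I expect the final proof to read, in its entirety: ``We construct the DFA for $E_1(\Sigma_4)$ as in Section~\ref{two}. It has $N$ states before minimization and $N'$ after. State $j$ is birecurrent, with paths labeled $x_0 = \cdots$ and $x_1 = \cdots$. The result now follows from the discussion in Section~4.''
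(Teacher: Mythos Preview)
Your proposal is correct and essentially identical to the paper's own proof: the paper constructs the DFA for $E_1(\Sigma_4)$ (listing the minimal forbidden factors $\{00,11,22,33,010,\ldots,323\}$ and displaying the resulting automaton), observes that state~6 is birecurrent with loop words $x_0 = 2301$ and $x_1 = 301$, and then applies $h(0)=2301$, $h(1)=301$ to the Thue--Morse word exactly as you anticipate. The only cosmetic difference is that the paper writes out the minimal forbidden set explicitly and shows the DFA in a figure rather than just reporting state counts.
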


\begin{proof}
To find the words avoiding all palindromes as factors except
these $5$, we can use Theorem~\ref{one}.  After computing
the minimal elements, it suffices to avoid the factors
$$ \{00,11,22,33,010,020,030,101,121,131,202,212,232,303,313,323\} .$$
The minimal DFA is depicted in Figure~\ref{fig1}.

\begin{figure}[H]
\begin{center}
\includegraphics[width=6.5in]{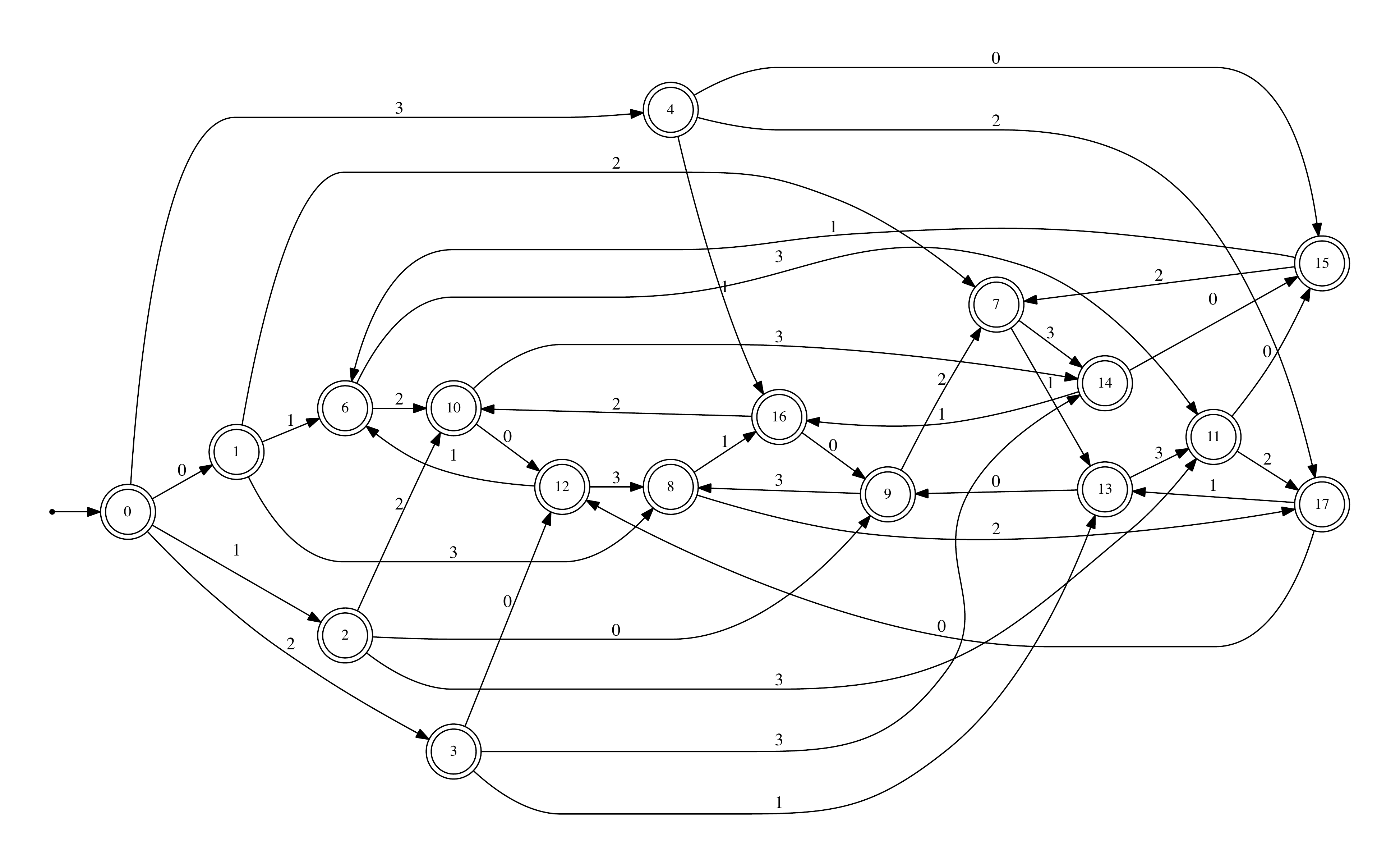}
\end{center}
\caption{Automaton for $4$-letter alphabet.  The dead state, numbered
5, is omitted.}
\label{fig1}
\end{figure}

The state numbered $6$ is birecurrent, with two paths labeled
$2301$ and $301$.  Let ${\bf x}$ be an aperiodic uniformly recurrent
word over $\{ 0, 1\}$ and define the morphism
$h(0) = 2301$ and $h(1) = 301$.  
For example, we can take $\bf x$ to be the Thue-Morse word.
Then $h({\bf x})$ has the desired properties.
\end{proof}

\begin{corollary}
The number $e_{4,1}(n)$ of finite words over $\Sigma_4$ having all their  palindromic
factors contained in $\{ \varepsilon, 0, 1, 2, 3 \}$ is $3 \cdot 2^n$
for $n \geq 2$.
\end{corollary}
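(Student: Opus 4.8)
The plan is to count paths of length $n$ in the minimal DFA of Figure~\ref{fig1}, or equivalently to compute $e_{4,1}(n) = v M^n w$ where $M$ is the transition matrix of that automaton, using the same machinery developed in Section~\ref{three}. First I would identify the reachable, non-dead states of the DFA. From the remark preceding the corollary, the relevant language is the set of words over $\Sigma_4$ avoiding the finite set of factors $\{00,11,22,33,010,020,030,101,121,131,202,212,232,303,313,323\}$. Intuitively, once a word has length $\geq 2$, its future behavior depends only on its last letter together with a bounded amount of recent history needed to detect the length-$3$ forbidden factors; so the state of a scanned word of length $\geq 2$ should be determined by its last two letters (subject to feasibility), giving a small automaton on which the count is easy.

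The key structural claim is that for every word $w$ of length $\geq 2$ that survives (lies in $E_{4,1}(\Sigma_4)$), the set of letters $a$ such that $wa$ also survives has size exactly $2$, and moreover the state reached by $w$ depends only on the last letter of $w$. Concretely: if $w$ ends in letter $b$, then $wa$ survives iff $a \neq b$ and $ba$ is not the reverse of a forbidden length-$3$ palindrome completion — but checking the forbidden list shows the only constraints come from $a \neq b$ (forbidding $bb$) and from avoiding $bab$-type patterns, which is a constraint on the letter two positions back, not on $a$ itself. A short case check against the 16-element forbidden set confirms that from any length-$\geq 2$ surviving word ending in $b$, exactly two extensions survive, and that the resulting automaton on the four ``ending-letter'' states is $2$-regular (each state has out-degree $2$ and the graph is strongly connected). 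Hence for $n \geq 2$ the number of surviving words of length $n$ is $(\text{number of surviving words of length }2)\cdot 2^{n-2}$.

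It then remains to count surviving words of length exactly $2$: these are the words $ab$ with $a \neq b$, of which there are $4 \cdot 3 = 12 = 3 \cdot 2^2$, so $e_{4,1}(2) = 12$ and by the doubling recurrence $e_{4,1}(n) = 3 \cdot 2^n$ for all $n \geq 2$, as claimed. Alternatively, and more in the spirit of the rest of the paper, one computes the minimal polynomial of the transition matrix $M$ of the minimized DFA (Maple, as elsewhere), observes that apart from a factor of a power of $X$ (reflecting transient behavior up to length $2$) the only relevant eigenvalue is $2$, applies Theorem~\ref{three} to extract the minimal annihilator $X - 2$, and checks the initial value $e_{4,1}(2) = 12$ to pin down the constant.

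The main obstacle is the structural verification in the second paragraph: one must be careful that the reachable part of the minimized DFA really does collapse to these four states once length $2$ is reached, i.e.\ that no longer-range interaction among the forbidden length-$3$ factors forces a surviving word to have fewer than two legal extensions, and that the start and length-$1$ states behave as expected (accounting for why the formula only holds for $n \geq 2$: the words of length $0$ and $1$ number $1$ and $4$, not $3\cdot 2^0 = 3$ or $3 \cdot 2^1 = 6$). This is a finite, mechanical check — precisely the sort of thing the automata-based method of this paper is designed to make routine — so in the write-up I would simply invoke the computed DFA of Figure~\ref{fig1} and read off that each non-dead state other than the start and the four length-$1$ states has out-degree $2$, whence $e_{4,1}(n) = vM^n w = 3 \cdot 2^n$ for $n \geq 2$.
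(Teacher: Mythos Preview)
Your alternative approach---computing the minimal polynomial of the transition matrix, extracting the minimal annihilator $X-2$, and checking $e_{4,1}(2)=12$---is exactly what the paper does.

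Your primary combinatorial argument is a more elementary route and is essentially correct, but it contains a misstatement worth flagging: the state reached by a surviving word $w$ of length $\geq 2$ does \emph{not} depend only on the last letter of $w$. If $w$ ends in $cb$ (necessarily $c\neq b$), then the valid one-letter extensions are precisely the letters $a\notin\{b,c\}$---you need $a\neq b$ to avoid $bb$ and $a\neq c$ to avoid the length-$3$ palindrome $cbc$---so the state must remember both of the last two letters. The non-transient part of the minimal DFA in Figure~\ref{fig1} therefore has $12$ states (one per ordered pair of distinct letters), not $4$; together with the start state and the four length-$1$ states this gives $17$ non-dead states, consistent with the $18$-tuples used in the Berstel--Boasson--Carton--Fagnot verification later in the paper. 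Fortunately this error does not affect the count: all you actually use is that each such state has out-degree exactly $2$, which is correct and which you restate correctly in your final paragraph.

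The trade-off: the paper's matrix--annihilator method is uniform across every example in the paper, which is its point; your direct out-degree argument is shorter here only because the forbidden set is simple enough that the constant out-degree is visible by hand, something that fails in cases like $E_5(\Sigma_2)$.
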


\begin{proof}
The minimal polynomial of the matrix corresponding to the
automaton is
$X^2 (X-1) (X-2)(X-4)(X+1)(X^2 + X + 2)$.
Using the procedure in Section~\ref{three} we can determine
the minimal annihilator,which is $X-2$.  
It follows that $e_{4,1} (n) = 3 \cdot 2^n$ for $n \geq 2$.
\end{proof}

Berstel, Boasson, Carton, and Fagnot
\cite{Berstel&Boasson&Carton&Fagnot:2009}
constructed an infinite word over $\Sigma_4$ that is uniformly
recurrent,
has exactly $5$ palindromic factors, and
further is closed under reversal,
as follows:  define $B_0 = 01$ and $B_{n+1} = B_n 23 B_n^R$.  This
is an example of {\it perturbed symmetry}; see
\cite{Dekking&MendesFrance&vanderPoorten:1982} for more details.
We can verify their construction using our method.  Consider the DFA
in Figure~\ref{fig1}; then each word $w$ induces a transformation $\tau_w$ of
the states given by $q \rightarrow \delta(q,w)$.  We claim that
\begin{itemize}
\item[(a)] $\tau_{B_n} = \tau_{B_n^R} = (9, 5, 5, 9, 9, 5, 5, 5, 5, 5, 9, 9, 5, 5, 9, 5, 5, 9)$ for $n \geq 1$;
\item[(b)] $\tau_{23} = (17, 17, 17, 5, 5, 5, 17, 5, 5, 17, 5, 5, 17, 17, 5, 5, 5, 5)$.
\item[(c)] $\tau_{32} = (14, 14, 14, 5, 5, 5, 14, 5, 5, 14, 5, 5, 5, 5, 5, 14, 14, 5)$.
\end{itemize}
The claims about $\tau_{B_1}$, $\tau_{B_1^R}$, $\tau_{23}$, and
$\tau{32}$ are easily verified.  We now prove the claim about $B_n$
by induction.  The reader can now
check that $\tau_{B_{n+1}} = \tau_{B_n 23 B_n^R} = \tau_{B_n}$
and $\tau_{B_{n+1}^R} = \tau_{B_n 32 B_n^R} = \tau_{B_n}$.
Since $0$ is mapped to accepting state $9$ by $B_n$, it follows that
each $B_n$ has the desired properties.

\section{Odd and even palindromes}

In order to illustrate that the technique in this paper has wider
applicability, we now turn to a topic not covered in the paper of 
Fici and Zamboni.
Because an odd palindrome factor of length $\ell$ implies the
existence of odd palindrome factors of all shorter lengths,
and the same for even palindrome factors, it makes sense to
consider minimizing the lengths of odd and even palindrome factors
separately.  This is what we do in this section.

We define $r_{k,\ell, m} (n)$ to be the number of length-$n$ words
in $R_{\ell,m} (\Sigma_k)$.

\subsection{Alphabet size 2}

\begin{theorem}
There are uncountably many uniformly recurrent binary words 
having longest even palindrome factor of length $\leq 2$
and longest odd palindrome of length $\leq 5$.
\end{theorem}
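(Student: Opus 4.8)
The plan is to follow the template already used several times in this paper. First I would build the minimal DFA $M$ for the language $R_{2,5}(\Sigma_2)$. By Corollary~\ref{palreg}(c),
\[
R_{2,5}(\Sigma_2) = C_{\Sigma_2}\Bigl( \bigl(P_{\leq 2}(\Sigma_2) \cap (\Sigma_2^2)^*\bigr) \ \cup \ \bigl(P_{\leq 5}(\Sigma_2) \cap \Sigma_2(\Sigma_2^2)^*\bigr) \Bigr),
\]
so this language is regular. Concretely, $M$ is obtained by the breadth-first, queue-based construction of Section~\ref{two}: starting from $(\varepsilon,\emptyset)$, for each prefix one records the set of palindromic factors produced so far and kills a branch as soon as it contains an even palindrome of length $>2$ or an odd palindrome of length $>5$. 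Equivalently, by Theorem~\ref{one} and the remark following it, it suffices to take the minimal forbidden factors under the factor ordering (here the length-$4$ even palindromes together with certain length-$7$ odd palindromes) and build the avoidance automaton for that finite set. Minimizing the result yields the explicit DFA whose size the accompanying computation reports.

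Second, I would search $M$ for a birecurrent state: a state $q$ and two noncommuting words $x_0,x_1$ with $\delta(q,x_0)=\delta(q,x_1)=q$. This is the step the machine performs for us; the computation should exhibit such a state together with explicit loop labels $x_0$ and $x_1$.

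Third, I would convert this into the stated conclusion exactly as in the earlier theorems of the paper. By the results of \cite{Fleischer&Shallit:2019} recalled above, the presence of a birecurrent state already gives uncountably many aperiodic infinite words avoiding the forbidden set, each of the form $h(\mathbf{x})$, where $h$ is the morphism $0 \mapsto x_0$, $1 \mapsto x_1$ and $\mathbf{x}$ is an infinite binary word. To obtain uniform recurrence as well, take $\mathbf{x}$ to be a uniformly recurrent aperiodic binary word, for instance the Thue--Morse word $\mathbf{t}$, or any of the uncountably many words in its shift orbit closure. Since the image of a uniformly recurrent word under a nonerasing morphism is again uniformly recurrent, $h(\mathbf{x})$ is uniformly recurrent, and it is aperiodic because $x_0$ and $x_1$ do not commute. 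Distinct choices of $\mathbf{x}$ produce distinct words $h(\mathbf{x})$, so there are uncountably many, all with longest even palindrome factor of length $\leq 2$ and longest odd palindrome factor of length $\leq 5$.

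The main obstacle is the second step, namely certifying the birecurrent state. This is not something one checks by hand; it depends on the explicit minimized DFA and a search over its states, and one must verify that the two return words $x_0,x_1$ are genuinely noncommuting---neither of them a power of a common word---since that noncommutation is exactly what forces the aperiodicity of $h(\mathbf{x})$. A state admitting two return loops whose labels happened to be powers of a common word would not suffice. The remaining ingredients---regularity of $R_{2,5}(\Sigma_2)$, correctness of the queue construction, and the uniformly-recurrent-image argument---are either proved earlier in the paper or completely routine.
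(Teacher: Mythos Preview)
Your proposal is correct and follows essentially the same approach as the paper: construct the DFA for $R_{2,5}(\Sigma_2)$, locate a birecurrent state, and invoke the companion paper's results (plus the nonerasing-morphism argument) to obtain uncountably many uniformly recurrent aperiodic words. The paper's own proof simply fills in the computational data you leave as an obstacle---$155$ states before minimization, $44$ after, with state $18$ birecurrent via $x_0 = 10100011$ and $x_1 = 1010100011$.
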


\begin{figure}[H]
\begin{center}
\includegraphics[width=6.5in]{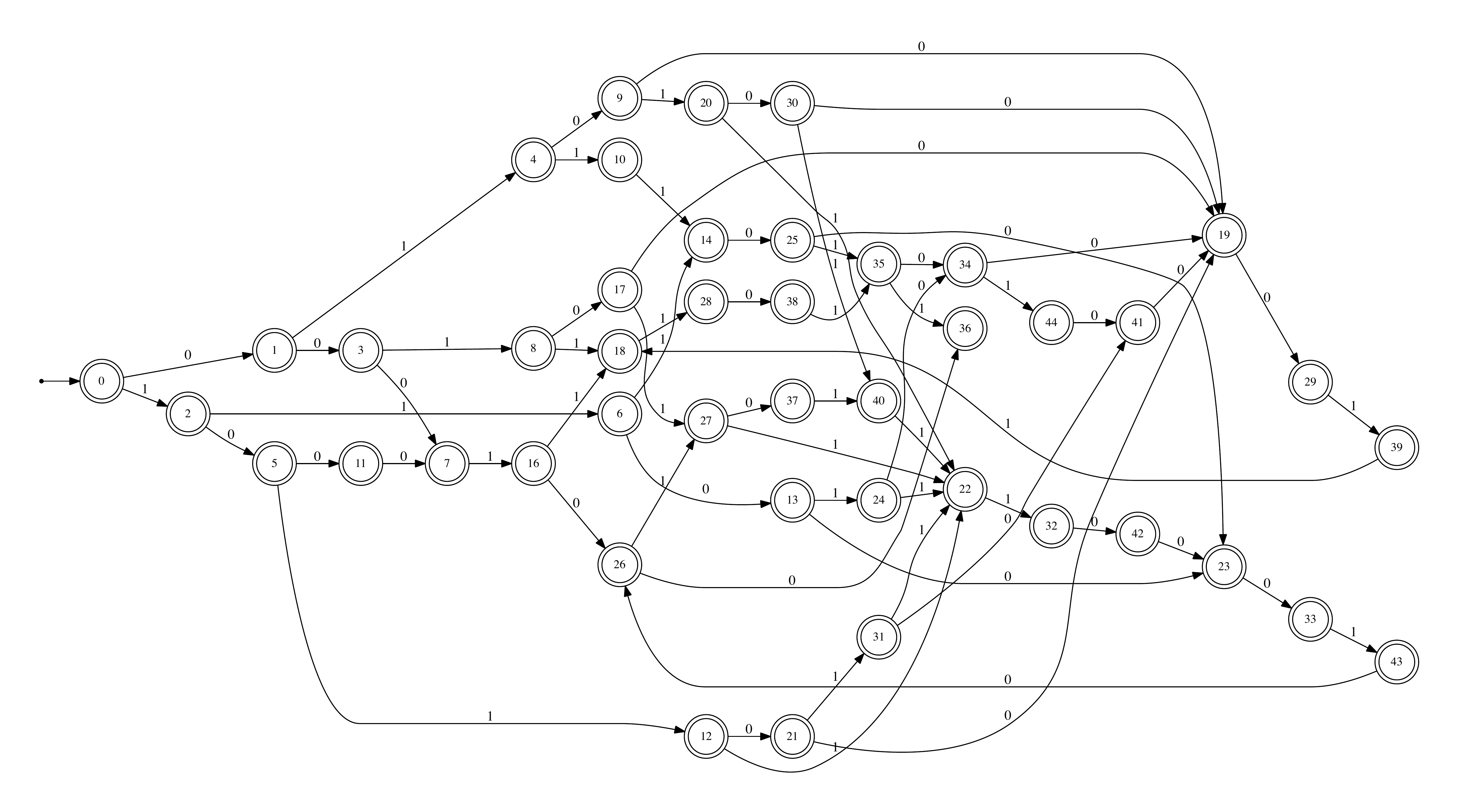}
\end{center}
\caption{Automaton for binary words with 
longest even palindrome factor of length $\leq 2$
and longest odd palindrome of length $\leq 5$.}
\label{palm2-25}
\end{figure}

\begin{proof}
We construct the automaton for $R_{2,5}(\Sigma_2)$ as discussed above.
Before minimization it has 155 states.
After minimization it has 44 states. 
State 18 is birecurrent, with cycles
labeled $x_0 = 10100011$ and $x_1 = 1010100011$.
\end{proof}

\begin{theorem}
Let $(r_{2,2,5}(n))_{n \geq 0}$ denote the number of finite binary
words containing 
longest even palindrome factor of length $\leq 2$
and longest odd palindrome of length $\leq 5$.
Then
$r_{2,2,5}(n) = r_{2,2,5}({n-8}) + r_{2,2,5}({n-10})$ for $n \geq 16$.  
Furthermore,
$r_{2,2,5}(n) \sim C_1 \alpha^n + C_2 (-\alpha)^n$,
$C_1 \doteq 15.991809$, $C_2 \doteq 0.023895$,
and $\alpha \doteq 1.0804184273981 $ is the largest
real zero of $X^{10} - X^2 - 1$.
\end{theorem}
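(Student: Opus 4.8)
The plan is to follow the template used for the earlier enumeration results, applied now to the $44$-state minimized DFA for $R_{2,5}(\Sigma_2)$ constructed in the preceding theorem. First I would form the transition matrix $M$ of that automaton, with $M[i,j]$ equal to the number of letters sending state $i$ to state $j$, together with the row vector $v$ picking out the start state and the column vector $w$ picking out the accepting states, so that $r_{2,2,5}(n) = vM^nw$ as in Section~\ref{three}. Using {\tt Maple}'s {\tt LinearAlgebra} package I would then compute the minimal polynomial of $M$ and factor it over $\mathbb{Q}$; I expect this factorization to contain a power $X^k$ coming from the transient part of the automaton, the irreducible polynomial $X^{10}-X^2-1$, and possibly some further irreducible factors.

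Next I would apply Algorithm LDA to the pair consisting of this minimal polynomial and the sequence $(r_{2,2,5}(n))_{n\ge 0}$: drop the factor $X^k$ (permissible, by the remark at the end of Section~\ref{three}, once one allows the recurrence to start at a sufficiently large index), and then test each remaining irreducible factor for removability by checking whether the resulting quotient annihilates sufficiently many initial windows of the sequence, which suffices by the theorem in Section~\ref{three}. I expect everything to cancel except $X^{10}-X^2-1$, so that this polynomial is the minimal annihilator; unwinding it gives $r_{2,2,5}(n) = r_{2,2,5}(n-8)+r_{2,2,5}(n-10)$. To justify the explicit bound $n \ge 16$ I would compare this recurrence against the directly computed values $r_{2,2,5}(0), \ldots, r_{2,2,5}(16)$ (obtained by counting words accepted by the DFA, or by a short breadth-first search over binary words) and record the least index from which it holds exactly.

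For the asymptotics it then suffices to analyze the roots of the degree-$10$ polynomial $X^{10}-X^2-1$, since once it is the minimal annihilator it controls the growth of the sequence. This polynomial is even, so its roots occur in pairs $\pm\rho$; its unique positive real root is $\alpha \doteq 1.0804184273981$, and $-\alpha$ is therefore also a root, while a numerical check shows that every other root has modulus strictly smaller than $\alpha$. Hence the standard partial-fraction expansion of the rational generating function $\sum_{n\ge 0} r_{2,2,5}(n)\,x^n$ yields $r_{2,2,5}(n) = C_1\alpha^n + C_2(-\alpha)^n + o(\alpha^n)$, where $C_1$ and $C_2$ are (up to the usual normalization) the residues at $x = 1/\alpha$ and $x = -1/\alpha$; equivalently, $C_1$ and $C_2$ can be solved for from enough initial values together with the recurrence, which gives $C_1 \doteq 15.991809$ and $C_2 \doteq 0.023895$.

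I expect the LDA step to be the only real obstacle. One must confirm that no proper divisor of $X^{10}-X^2-1$, and no product of it with a surviving factor of the minimal polynomial, annihilates the sequence — which is exactly what the window checks rule out — and one must confirm that $\alpha$ and $-\alpha$ each occur with nonzero coefficient, so that the genuinely two-term asymptotic (rather than $r_{2,2,5}(n) \sim C_1\alpha^n$ alone) is the correct statement. Both points are settled by the explicit {\tt Maple} computation, and everything else reduces to routine linear algebra and numerical root-finding.
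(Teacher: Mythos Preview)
Your proposal is correct and follows essentially the same route as the paper: form the transition matrix of the minimized DFA for $R_{2,5}(\Sigma_2)$, compute its minimal polynomial (which the paper reports as $X^6(X-2)(X^{10}-X^2-1)$), run Algorithm~LDA to strip it down to the minimal annihilator $X^{10}-X^2-1$, and then read off the recurrence and the dominant-root asymptotics. Your write-up is considerably more explicit than the paper's two-line proof---in particular your observation that $X^{10}-X^2-1$ is even, forcing the roots to occur in $\pm$ pairs and hence explaining the two-term asymptotic, and your remark about verifying the threshold $n\ge 16$ against initial values, are details the paper leaves implicit---but the underlying method is identical.
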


\begin{proof}
The minimal polynomial of the corresponding matrix is
$$X^6(X-2)(X^{10} - X^2 - 1).$$
The minimal annihilator of the recurrence can be
determined by using the ideas in Section~\ref{three}; it is
$X^{10} - X^2 - 1$.
\end{proof}

\begin{remark}
The sequence $r_{2,2,5} (n)$ is sequence
\seqnum{A330130} in the OEIS.
\end{remark}

The case of longest even palindrome factor of length $\leq 4$
and longest odd palindrome of length $\leq 3$ is already
covered in Theorem~\ref{pal2len4-thm}.

\begin{theorem}
There are uncountably many uniformly recurrent binary words over
having
longest even palindrome factor of length $\leq 6$
and longest odd palindrome of length $\leq 3$.
\end{theorem}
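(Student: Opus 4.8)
The plan is to follow exactly the template used for the preceding theorems in this section, reducing the statement to the existence of a birecurrent state in a suitable DFA. First I would invoke Corollary~\ref{palreg}(c) to write $R_{6,3}(\Sigma_2) = C_{\Sigma_2}(S)$, where
$$S = \bigl( P_{\leq 6}(\Sigma_2) \cap (\Sigma_2^2)^* \bigr) \ \cup \ \bigl( P_{\leq 3}(\Sigma_2) \cap \Sigma_2(\Sigma_2^2)^* \bigr)$$
is the explicit finite set consisting of all even palindromes of length at most $6$ together with all odd palindromes of length at most $3$; by Theorem~\ref{one} this language is regular. Using the direct construction described in Section~\ref{two} (or, equivalently, the minimal-forbidden-factor reformulation in the Remark following Theorem~\ref{one}), I would build the minimal DFA $M$ for $R_{6,3}(\Sigma_2)$ by a breadth-first exploration of the reachable states, discarding any state whose accumulated palindrome set violates the length bounds.

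Once $M$ is in hand, the crucial step is to exhibit a birecurrent state: a state $q$ together with two noncommuting words $x_0, x_1$ such that $\delta(q, x_0) = \delta(q, x_1) = q$. By the results recalled earlier in the paper from the companion paper \cite{Fleischer&Shallit:2019}, the existence of such a state already gives uncountably many aperiodic infinite words with the avoidance property, namely the images under the morphism $h \colon 0 \mapsto x_0$, $1 \mapsto x_1$ of aperiodic binary words; applying $h$ to a \emph{uniformly recurrent} aperiodic binary word such as the Thue-Morse word ${\bf t}$, and using that nonerasing morphic images of uniformly recurrent words are uniformly recurrent, yields exactly the uniformly recurrent, aperiodic family claimed in the theorem. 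Thus the proof reduces entirely to a finite check on $M$.

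The main obstacle is purely computational, as the paper notes for the analogous cases: the unminimized automaton for $R_{6,3}(\Sigma_2)$ will have on the order of a few hundred states, and among the recurrent states of the minimized automaton one must locate one admitting two distinct return words that fail to commute (two return words commute iff both are powers of a common word, which is trivial to test). By analogy with $R_{2,5}(\Sigma_2)$, whose birecurrent loops had lengths $8$ and $10$, I expect a birecurrent state here with short loop labels of comparable length; once a concrete pair $(x_0, x_1)$ is produced, verifying that both loops return to $q$ and that the labels are noncommuting is a routine traversal of the transition table of $M$. The write-up would then simply record the number of states before and after minimization, the state number, and the two loop words, in the same style as the theorems above.
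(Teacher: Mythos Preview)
Your proposal is correct and follows essentially the same approach as the paper: construct the DFA for $R_{6,3}(\Sigma_2)$, locate a birecurrent state, and invoke the companion-paper machinery with a morphic image of a uniformly recurrent word such as Thue--Morse. The paper reports 477 states before minimization, 60 after, and a birecurrent state~17 with loops $x_0 = 110010$ and $x_1 = 1111000010$, matching your template exactly.
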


\begin{figure}[H]
\begin{center}
\includegraphics[width=6.5in]{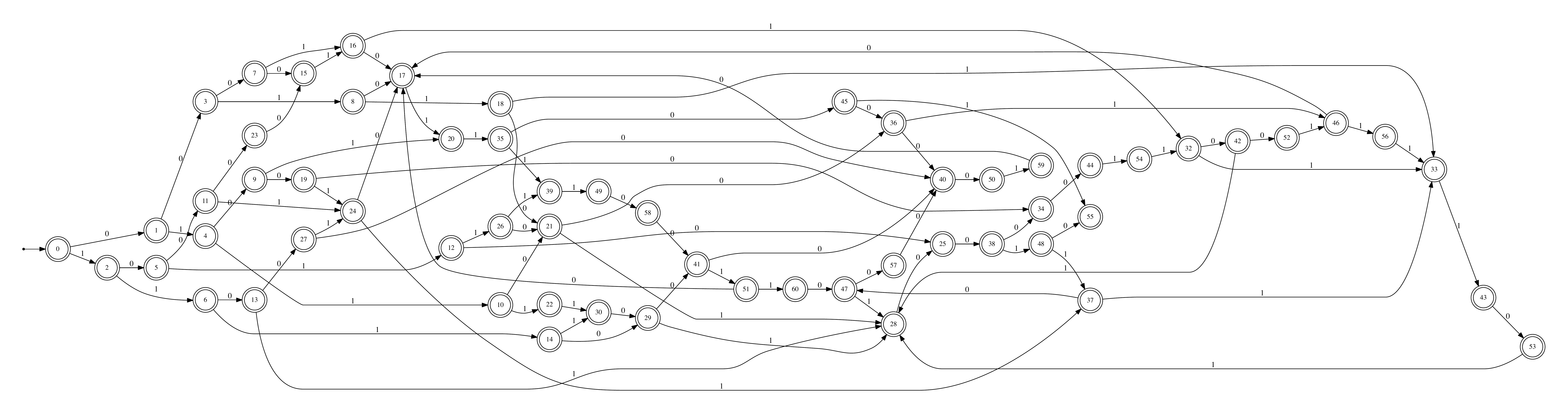}
\end{center}
\caption{Automaton for binary words with 
longest even palindrome factor of length $\leq 6$
and longest odd palindrome of length $\leq 3$.}
\label{palm2-63}
\end{figure}

\begin{proof}
We construct the automaton for $R_{6,3}(\Sigma_2)$ as discussed above.
Before minimization it has 477 states.
After minimization it has 60 states.
State 17 is birecurrent, with cycles
labeled $x_0 = 110010$ and $x_1 = 1111000010$.
\end{proof}

\begin{theorem}
Let $(r_{2,6,3}(n))_{n \geq 0}$ denote the number of finite binary
words containing 
longest even palindrome factor of length $\leq 6$
and longest odd palindrome of length $\leq 3$.
Then
$r_{2,6,3}(n) = r_{2,6,3}({n-6}) + 2r_{2,6,3}({n-8}) + 3r_{2,6,3}({n-10})
+ r_{2,6,3}({n-14})$ for $n \geq 21$.  Furthermore,
and $r_{2,6,3}(n) \sim C_1 \alpha^{n} + C_2 (-\alpha)^n$,
where
$C_1 \doteq 11.58110542$, 
$C_2 \doteq 0.00264754$, 
and $\alpha \doteq 1.244528319539183$ is the largest
real zero of $X^{14} - X^8 -2X^6 - 3X^4 - 1$.
\end{theorem}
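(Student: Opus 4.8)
The plan is to proceed exactly as in the preceding enumeration results of this section. The minimized DFA $M$ for $R_{6,3}(\Sigma_2)$ is already in hand from the previous theorem: it has $60$ states plus a dead state, and it is regular by Corollary~\ref{palreg}(c), which first expresses $R_{6,3}(\Sigma_2)$ as $C_{\Sigma_2}(S)$ for an explicit finite set $S$ of palindromes and then, via Theorem~\ref{one}, reduces the problem to avoidance of a finite set of forbidden factors. From $M$ I would form the $61 \times 61$ matrix $M$ with $M[i,j] = \#\{a \in \Sigma_2 : \delta(q_i,a) = q_j\}$, together with the vectors $v$ and $w$ as in Section~\ref{three}, so that $r_{2,6,3}(n) = vM^nw$.

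Next, I would use Maple's \texttt{LinearAlgebra} package to compute the minimal polynomial $\mu(X)$ of $M$ and factor it over $\mathbb{Q}$. The expected outcome, and the heart of the computation, is that $\mu(X) = X^{7}(X-2)\,g(X)\,(X^{14} - X^8 - 2X^6 - 3X^4 - 1)$ for some polynomial $g$, where the power of $X$ is the transient part of the automaton, the factors $X - 2$ and $g$ turn out to make no contribution to the sequence, and $X^{14} - X^8 - 2X^6 - 3X^4 - 1$ is the interesting piece. I would then run Algorithm LDA on $\mu$ and the sequence $(r_{2,6,3}(n))_{n \geq 0}$: for each irreducible factor $q_i$ of $\mu$, test whether $\mu/q_i$ annihilates the first $\deg(\mu/q_i)$ windows of the sequence, and divide it out if so. By the theorem of Section~\ref{three} underlying Algorithm LDA, the output is a genuine annihilator of the entire sequence; I expect it to be exactly $X^{14} - X^8 - 2X^6 - 3X^4 - 1$, which unwinds to the stated recurrence. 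The range $n \geq 21$ is consistent with the $X^{7}$ factor ($21 = 7 + 14$, matching the pattern of the earlier theorems), and in any case I would verify it directly against the computed initial values.

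For the asymptotics I would exploit that the minimal annihilator is a polynomial in $X^2$: writing $X^{14} - X^8 - 2X^6 - 3X^4 - 1 = f(X^2)$ with $f(Y) = Y^7 - Y^4 - 2Y^3 - 3Y^2 - 1$, its zeros occur in pairs $\pm\alpha$. A finite numerical check shows that $f$ has a unique real zero $\beta = \alpha^2 \doteq 1.549$ strictly dominating every other zero of $f$ in modulus, so that $\pm\alpha$ with $\alpha \doteq 1.244528319539183$ are the two dominant, simple zeros of $X^{14} - X^8 - 2X^6 - 3X^4 - 1$. The standard asymptotic theory of constant-coefficient linear recurrences then gives $r_{2,6,3}(n) = C_1\alpha^n + C_2(-\alpha)^n + o(\alpha^n)$, and $C_1 \doteq 11.58110542$ and $C_2 \doteq 0.00264754$ are recovered either by a partial-fraction expansion of the rational generating function $\sum_{n \geq 0} r_{2,6,3}(n) X^n$ or by solving a small linear system matching the closed form against enough initial terms.

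The principal obstacle here is computational reliability rather than mathematical depth. One must be certain that the finite forbidden-factor set handed to Theorem~\ref{one} is correct, that Maple's minimal polynomial and factorization of the $61 \times 61$ matrix are exact, and, most delicately, that Algorithm LDA terminates with precisely $X^{14} - X^8 - 2X^6 - 3X^4 - 1$ and not a proper multiple of it; this rests on the factor-by-factor window tests, which should be recorded explicitly. A secondary delicate point, caused by the annihilator being an even polynomial, is the root analysis: one must confirm that no further pair of real zeros, and no complex-conjugate pair, attains modulus $\alpha$, so that the two-term asymptotic expansion is legitimate.
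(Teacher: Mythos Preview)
Your proposal is correct and follows the paper's approach exactly: compute the minimal polynomial of the transition matrix of the $61$-state DFA for $R_{6,3}(\Sigma_2)$, apply Algorithm~LDA to isolate the minimal annihilator $X^{14}-X^8-2X^6-3X^4-1$, and read off the recurrence and dominant-root asymptotics. The paper records the explicit minimal polynomial as $X^{7}(X-2)(X^{2}+1)(X^{14}-X^{8}-2X^{6}-3X^{4}-1)(X^{12}-X^{10}+X^{8}-2X^{6}+X^{2}-1)$, so your unspecified factor $g(X)$ is $(X^{2}+1)(X^{12}-X^{10}+X^{8}-2X^{6}+X^{2}-1)$.
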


\begin{proof}
The minimal polynomial of the corresponding matrix is
$$X^7 (X-2)(X^2 + 1)(X^{14} - X^8 -2X^6 - 3X^4 - 1)(X^{12} - X^{10} + X^8 - 2X^6 + X^2 - 1).$$
The minimal annihilator of the recurrence can be
determined by using the ideas in Section~\ref{three}; it is
$X^{14} - X^8 -2X^6 - 3X^4 - 1$.
\end{proof}

\begin{remark}
The sequence $r_{2,6,3}$ is sequence \seqnum{A330131} in the OEIS.
\end{remark}

\subsection{Alphabet size 3}

\begin{theorem}
There are uncountably many uniformly recurrent words over
$\Sigma_3$ containing no (nonempty) even palindromic factors 
and longest odd palindrome of length $\leq 3$.
\end{theorem}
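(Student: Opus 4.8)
The plan is to follow the automaton-based recipe used repeatedly in this paper. First I would identify the language in question as $R_{0,3}(\Sigma_3)$ in the notation of Corollary~\ref{palreg}(c) (no even palindrome of length $>0$, no odd palindrome of length $>3$), so that it is regular and its minimal DFA can be produced by the direct construction of Section~\ref{two}. Equivalently, by Theorem~\ref{one} together with the Remark following it, it is enough to list the minimal forbidden factors: forbidding all nonempty even palindromes reduces, via the length hierarchy, to forbidding the length-$2$ palindromes $00$, $11$, $22$; forbidding all odd palindromes of length $>3$ reduces to forbidding those of length $5$, and among these the only ones not already ruled out by $00,11,22$ are the $12$ words $abcba$ with $a\ne b$ and $b\ne c$. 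So the language is $C_{\Sigma_3}$ of (the complement of) this explicit $15$-element set, from which I would build the minimal DFA $M$.

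Next I would search $M$ for a birecurrent state, that is, a state $q$ and two noncommuting words $x_0,x_1$ with $\delta(q,x_0)=\delta(q,x_1)=q$, neither loop passing through the dead state; as in the rest of the paper this is done by a breadth-first exploration of the reachable part of $M$, and the loops are then exhibited explicitly. As a sanity check one notes that $(012)^\omega$ labels an infinite path in $M$ (it has no palindromic factor of length $\geq 2$ at all), so $M$ certainly admits infinite paths; the real content is that it also admits a birecurrent state. Granting that, I would invoke the results of \cite{Fleischer&Shallit:2019} recalled earlier: a birecurrent state produces uncountably many aperiodic infinite words with the given avoidance property, namely the images $h(\mathbf{x})$ of aperiodic binary words $\mathbf{x}$ under the morphism $h\colon 0\mapsto x_0,\ 1\mapsto x_1$. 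Taking $\mathbf{x}$ to be a uniformly recurrent aperiodic binary word such as the Thue--Morse word, and using that the image of a uniformly recurrent word under a nonerasing morphism is again uniformly recurrent, yields uncountably many uniformly recurrent aperiodic words over $\Sigma_3$ with no nonempty even palindromic factor and no odd palindromic factor of length $>3$, as required.

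The only real obstacle is computational: building $M$ and, above all, confirming that it has a birecurrent state. A priori it is not obvious that forbidding every nonempty even palindrome (in particular every factor $aa$) still leaves enough freedom on three letters to support an aperiodic, let alone uniformly recurrent aperiodic, infinite word, since the constraint already forces consecutive letters to differ and then additionally bans the reflection patterns $abcba$. But, exactly as elsewhere in the paper, an explicit pair of loops $x_0,x_1$ at some state settles this, and once they are in hand the remainder of the argument is the standard transfer through $h$ and the Thue--Morse word.
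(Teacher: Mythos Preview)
Your approach is exactly the paper's: identify the language as $R_{0,3}(\Sigma_3)$, build the DFA, locate a birecurrent state, and then push an aperiodic uniformly recurrent binary word through the morphism $h$. Your derivation of the $15$ minimal forbidden factors ($00,11,22$ together with the twelve $abcba$ with $a\ne b$, $b\ne c$) is correct and matches what Theorem~\ref{one} and its Remark would yield.

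The one piece you explicitly leave open is the only piece with actual content: you do not exhibit the birecurrent state or the two loops. The paper does precisely this computation, reporting that the automaton for $R_{0,3}(\Sigma_3)$ has $88$ states before minimization and $34$ after, with state $16$ birecurrent via $x_0=021210102$ and $x_1=1210102$. Once those words are in hand, the verification is routine (they do not commute, and one checks that any concatenation over $\{x_0,x_1\}$ avoids the fifteen forbidden factors, which is exactly what the DFA loop certifies). So your proposal is a faithful outline of the paper's proof, lacking only the computational output that the paper supplies.
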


\begin{figure}[H]
\begin{center}
\includegraphics[width=6.5in]{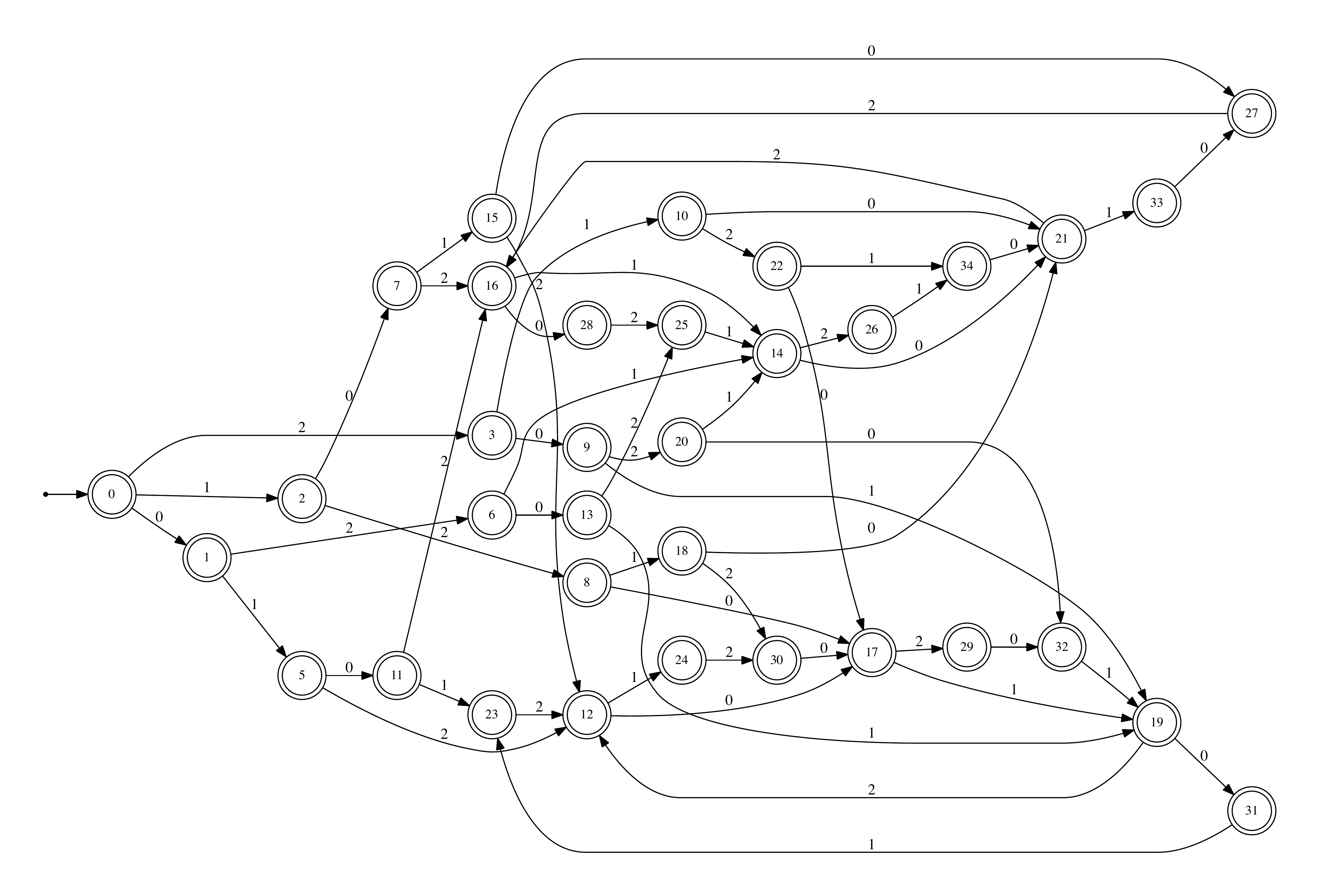}
\end{center}
\caption{Automaton for ternary words with no even palindromic
factors and longest odd palindrome of length $3$}
\label{palm3-03}
\end{figure}

\begin{proof}
We construct the automaton for $R_{0,3} (\Sigma_3)$ as discussed 
in Section~\ref{two}.
Before minimization it has 88 states.
After minimization it has 34 states.
State 16 is birecurrent, with cycles
labeled $x_0 = 021210102$ and $x_1 = 1210102$.
\end{proof}

\begin{theorem}
Let $(r_{3,0,3} (n))_{n \geq 0}$ denote the number of finite ternary
words containing no (nonempty) even palindromic factors 
and longest odd palindrome of length $3$.  Then
$$r_{0,3} (n) = r_{0,3}({n-1}) + r_{0,3} ({n-3})$$
for $n \geq 7$.  Furthermore, $r_{0,3} (n) \sim C \alpha^n$, where
$C \doteq 5.37711043$ 
and $\alpha \doteq 1.465571231876768$ is the largest
real zero of $X^3 - X^2 - 1$.
\end{theorem}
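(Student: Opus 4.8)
The plan is to follow the same template used for the earlier enumeration corollaries (for instance the ones for $e_{3,2}(n)$ and for $d_{3,5}(n)$). Starting from the $34$-state minimized DFA for $R_{0,3}(\Sigma_3)$ constructed in the proof of the preceding theorem, I form the transition-count matrix $M$ together with the vectors $v,w$ of Section~\ref{three}, so that $r_{3,0,3}(n)=vM^nw$. Using {\tt Maple}, as elsewhere in the paper, I compute and factor the minimal polynomial $p(X)$ of $M$. Reading the claimed recurrence backwards, $p$ should have the shape $X^4(X-3)(X^3-X^2-1)$, possibly times further cyclotomic-type factors: the multiplicity $4$ of $X$ is forced by the threshold $n\ge 7$ and $\deg(X^3-X^2-1)=3$ (see below), the factor $X-3$ reflects, as in the other ternary examples, the full behaviour on very short words, and $X^3-X^2-1$ carries the dominant growth.

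Next I run Algorithm LDA$(p,(r_{3,0,3}(n))_{n\ge 0})$ from Section~\ref{three}: for each irreducible factor $q_i$ of $p$, I test whether $p/q_i$ still annihilates the first $\deg(p/q_i)$ windows of the sequence, and if so replace $p$ by $p/q_i$. By the theorem in Section~\ref{three} the output is the lowest-degree annihilator, which here should be $X^3-X^2-1$. Converting this annihilator into a recurrence, and using the remark that the $X^k$ part of $p$ is not reported but instead shifts the index at which the recurrence first holds, gives $r_{3,0,3}(n)=r_{3,0,3}(n-1)+r_{3,0,3}(n-3)$ for $n\ge k+3=7$; one then checks this threshold directly against the computed values $r_{3,0,3}(0),\ldots,r_{3,0,3}(7)$.

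For the asymptotics, I note that $X^3-X^2-1$ has a single real root $\alpha\doteq 1.465571231876768$ (satisfying $\alpha^3=\alpha^2+1$) and a pair of complex-conjugate roots whose product equals $1/\alpha<1$, so each has modulus $\sqrt{1/\alpha}<1$. Hence the general solution of the recurrence is $r_{3,0,3}(n)=C\alpha^n+(\text{a term tending to }0)$, which yields $r_{3,0,3}(n)\sim C\alpha^n$; the constant $C$ is recovered by matching initial values (equivalently, as the residue of the rational generating function at $X=1/\alpha$), giving $C\doteq 5.37711043$.

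I do not expect a genuine obstacle here: everything reduces to a finite, mechanically checkable computation. The one point requiring care is pinning down the exact threshold $n\ge 7$, which is not determined by the annihilator alone but by the multiplicity of $X$ in the minimal polynomial of $M$, and so must be confirmed against the first several terms of the sequence; a larger power of $X$, or an off-by-one in the window count, would change the stated range.
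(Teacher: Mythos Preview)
Your proposal is correct and mirrors the paper's proof: form the transition matrix of the $34$-state DFA for $R_{0,3}(\Sigma_3)$, compute its minimal polynomial (which the paper gives as $X^4 (X-3)(X^2 - X + 1)(X^3 - X^2 - 1)(X^4 + 2X^3 + 2X^2 + X + 1)$, so your guessed shape was essentially right, though the last factor is not cyclotomic), run Algorithm~LDA to obtain the minimal annihilator $X^3-X^2-1$, and read off the recurrence, its threshold $n\ge 7$, and the asymptotics from the dominant root. Your explicit verification that the two non-real roots of $X^3-X^2-1$ have modulus $\sqrt{1/\alpha}<1$ is a nice addition the paper leaves implicit.
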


\begin{proof}
The minimal polynomial of the corresponding matrix is
$$X^4 (X-3) (X^2 - X + 1)(X^3 - X^2 - 1)(X^4 + 2X^3 + 2X^2 + X + 1).$$
The minimal annihilator of the recurrence can be
determined by using the technique in Section~\ref{three}; it is
$X^3 - X^2 - 1$.
\end{proof}

\begin{remark}
The sequence is  \seqnum{A330132} in the OEIS.
$r_{0,3} (n) = 6 \cdot$ \seqnum{A000930}$(n-1)$ for $n \geq 5$, where
\seqnum{A000930} is the Narayana cow sequence.
\end{remark}

The case of largest even palindrome of length $2$ and largest
odd palindrome of length $1$ is already covered in
Theorem~\ref{pal3len2-thm}.

\section{Number of odd and even palindromes}

Our final application is to infinite words containing a specified
number of even and odd palindromes.   We define
$t_{k,\ell,m}(n)$ to be the number of length-$n$ words
in $T_{\ell,m} (\Sigma_k)$.

\subsection{Alphabet size 2}

Here, instead of providing the
details, we simply summarize our results in tabular form.  The
minimal annihilators for the
sequences can be computed from the data we computed.

The following cases have infinite words, but not aperiodic infinite
words.

\begin{table}[H]
\centering
\begin{tabular}{|c|c|c|c|c|c|c|}
Max number of & Max number of & States & States & Example word  \\
even palindromes & odd palindromes & (unminimized) & (minimized) & \\
\hline
3 & 9 & 10795 & 1468 & $01(00010111)^\omega$ \\
3 & 8 & 3911 & 799 & $1(00010111)^\omega$ \\
4 & 7 & 7505 & 1181 & $01(0001011)^\omega$ \\
4 & 6 & 2413 & 530 & $1(0001011)^\omega$ \\
5 & 5 & 1647 & 419 & $0 (001011)^\omega$ \\
5 & 4 & 461 & 136 & $(001011)^\omega$ \\
6 & 5 & 3141 & 604 & $(00001011)^\omega$ \\
6 & 4 & 699 & 177 & $0(011001)^\omega$  \\
7 & 4 & 1081 & 261 & $10(011001)^\omega$ \\
8 & 4 & 1729 & 375 & $1101(001011)^\omega$ \\
\end{tabular}
\end{table}

The following cases have examples of aperiodic infinite words.

\begin{table}[H]
\centering
\resizebox{\columnwidth}{!}{%
\begin{tabular}{|c|c|c|c|c|c|c|}
Max number of & Max number of & States & States & $x_0$ & $x_1$ & Birecurrent \\
even palindromes & odd palindromes & (unminimized) & (minimized) & & & state number \\
\hline
3 & 10 & 33685 & 3071 & 00011101 & 0100011101 & 1836 \\
4 & 8 & 26937 & 2830 & 0010111 & 00010111 & 2364 \\
5 & 6 & 7495 & 1269 & 001011 & 0001011 & 1035  \\
7 & 5 & 6741 & 955 & 001011 & 00001011 & 904 \\
9 & 4 & 2789 & 545 & 001011 & 0011001011 & 450 \\
\end{tabular}%
}
\end{table}

\subsection{Alphabet size 3}

The only interesting case is one even palindrome and
five odd palindromes.  Here the automaton has 6208 states
(632 when minimized) and has a birecurrent state,
corresponding to $x_0 = 01012$ and $x_1 = 012$.

\section{Conclusions}

We have reproved most of the theorems in \cite{Fici&Zamboni:2013} using
a unified approach based on finite automata.  This is evidence for
the thesis, previously announced in \cite{Rajasekaran&Shallit&Smith:2019},
that long case-based arguments are good candidates for replacement by 
algorithms and logical decision procedures.

All of the code referred to in this paper is available at \\
\centerline{\url{https://cs.uwaterloo.ca/~shallit/papers.html} \ .}

\newcommand{\noopsort}[1]{} \newcommand{\singleletter}[1]{#1}

\end{document}